\newtheorem{theorem}{Theorem}
\newtheorem{lemma}{Lemma}
\newtheorem{corollary}{Corollary}
\newtheorem{observation}{Observation}
\newtheorem{problem}{Problem}
\newtheorem{question}{Question}
\def\etal{{et~al.}}
\def\eg{{e.g.}}
\def\ie{{i.e.}}
\newcommand{\alg}{\textsf{ALG}}
\newcommand{\opt}{\textsf{OPT}}
\newcommand{\dist}{{\rm dist}}
\newcommand{\NN}{\mathbb{N}} 
\newcommand{\ZZ}{\mathbb{Z}} 
\newcommand{\RR}{\mathbb{R}} 
\newcommand{\EE}{\mathbb{E}} 
\newcommand{\eps}{\varepsilon}
\def\B{\mathcal B}
\def\C{\mathcal C}
\def\Q{\mathcal Q}
\newcommand{\old}[1]{{}}
\newcommand{\later}[1]{{}}
\title{\textsc{Online Unit Clustering and Unit Covering \\
    in Higher Dimensions}
  \thanks{A preliminary version of this paper appeared in the
    \emph{Proceedings of the 15th Workshop on Approximation and Online Algorithms (WAOA)},
    LNCS~10787, Springer, Cham, 2017, pp.~238--252.}}
\author{Adrian Dumitrescu\thanks{
Algoresearch L.L.C., Milwaukee, WI, USA\@. Email:~\texttt{ad.dumitrescu@gmail.com}.}
\qquad
Csaba D. T\'oth\thanks{Department of Mathematics,
    California State University Northridge, Los Angeles, CA;
and Department of Computer Science, Tufts University, Medford, MA, USA\@.
Email:~\texttt{cdtoth@acm.org}.}
}
\begin{document}

\maketitle

\begin{abstract}
We revisit the online \textsc{Unit Clustering} and \textsc{Unit Covering}
problems in higher dimensions: Given a set of $n$ points in a metric space,
that arrive one by one, \textsc{Unit Clustering} asks to \emph{partition}
the points into the minimum number of clusters (subsets) of diameter at most one;
whereas \textsc{Unit Covering} asks to \emph{cover} all points by the minimum number
of balls of unit radius.
In this paper, we work in $\RR^d$ using the $L_\infty$ norm.

We show that the competitive ratio of any online algorithm (deterministic or randomized)
for \textsc{Unit Clustering} is $\Omega(d)$. In particular, it depends on the dimension $d$,
and this resolves an open problem raised by Epstein and van Stee (2008).
We also give a randomized online algorithm with competitive ratio $O(d^2)$
for \textsc{Unit Clustering} of integer points (\ie, points in $\ZZ^d$,
$d\in \NN$, under the $L_{\infty}$ norm).

We show that the competitive ratio of any deterministic online algorithm
for \textsc{Unit Covering} is at least $2^d$. This ratio is the best possible,
as it can be attained by a simple deterministic algorithm that assigns points
to a predefined set of unit hypercubes.
We complement these results with some additional lower bounds for
related problems in higher dimensions.

\medskip\noindent
\textbf{\small Keywords}: online algorithm, unit covering, unit clustering,
competitive ratio, greedy algorithm.

\end{abstract}

\section{Introduction} \label{sec:intro}

Covering and clustering are ubiquitous problems in the theory of algorithms,
computational geometry, optimization, and others.
Such problems can be asked in any metric space, however this generality
often restricts the quality of the results, particularly for online algorithms.
Here we study lower bounds for several such problems
in $\RR^d$, for a positive integer $d$, under the $L_\infty$ norm.
Recall that a ball under the $L_\infty$ norm is an axis-aligned hypercube.
We first consider their \emph{offline} versions.

\begin{problem} \label{prob:1}
\textsc{$k$-Center}. Given a set of $n$ points in $\RR^d$ and an integer $k$,
cover the set by $k$ congruent balls centered at the points so that the diameter
of the balls is minimized.
\end{problem}

The following two problems are dual to Problem~\ref{prob:1}.

\begin{problem}\label{prob:2}
\textsc{Unit Covering}. Given a set of $n$ points in $\RR^d$,
cover the set by balls of unit diameter so that the number of balls is minimized.
\end{problem}

\begin{problem}\label{prob:3}
\textsc{Unit Clustering}.
Given a set of $n$ points in $\RR^d$,
partition the set into clusters of diameter at most one so that the number of clusters is minimized.
\end{problem}

Problems~\ref{prob:1} and~\ref{prob:2} are easily solved in polynomial time for points on the line
($d=1$); however, both problems become NP-hard already in the Euclidean plane~\cite{FPT81,MS84}.
Factor $2$ approximations are known for \textsc{$k$-Center} in any metric space (and so for any
dimension)~\cite{FG88,Go85}; see also~\cite[Ch.~5]{Va01},~\cite[Ch.~2]{WS11},
while polynomial-time approximation schemes are known for \textsc{Unit Covering}
for any fixed dimension~\cite{HM85}. However, these algorithms are notoriously inefficient
and thereby impractical; see also~\cite{BLMS17} for a summary of results and different
time vs. ratio trade-offs.

Problems~\ref{prob:2} and~\ref{prob:3} look similar; indeed, one can go from balls
to clusters and vice versa in a straightforward way:
The balls in a unit covering form unit clusters if we assign multiply covered points
to unique balls. Conversely, the points in a unit cluster are contained in a unit ball
under $L_\infty$ norm, as the $x_i$-coordinates of the points differ by at most 1 for $i=1,\ldots , d$.
As such, the two problems are identical in the offline setting under the
$L_\infty$ norm\footnote{Problems~\ref{prob:2} and~\ref{prob:3} are equivalent
  under any norm in which every set of unit diameter is contained in a ball
  of unit diameter. This holds under the $L_1$ and $L_\infty$ norms,
  but not under the $L_p$ norm for any $1<p<\infty$ in $\RR^d$, $d\geq 2$.}.

We next consider their \emph{online} versions. In this paper we focus
on Problems~\ref{prob:2} and~\ref{prob:3} in particular.
It is worth emphasizing two common properties:
(i)~a point assigned to a cluster must remain in that cluster; and
(ii)~two distinct clusters cannot merge into one cluster, \ie, the clusters maintain their identities.

The performance of an online algorithm $\alg$ is measured by comparing it to an
optimal offline algorithm $\opt$ using the standard notion of competitive ratio~\cite[Ch.~1]{BY98}.
The competitive ratio of $\alg$ is defined as
$\sup_\sigma \frac{\alg(\sigma)}{\opt(\sigma)}$,
where $\sigma$ is an input sequence of request points,
$\opt(\sigma)$ is the cost of an optimal offline algorithm for $\sigma$
and $\alg(\sigma)$ denotes the cost of the solution produced by $\alg$ for this input.
For randomized algorithms, $\alg(\sigma)$ is  replaced by the expectation $E[\alg(\sigma)]$,
and the competitive ratio of $\alg$ is $\sup_\sigma \frac{E[\alg(\sigma)]}{\opt(\sigma)}$.
Whenever there is no danger of confusion, we use $\alg$ to refer to an algorithm
or the cost of its solution, as needed.

When discussing lower bounds for a \emph{randomized} online algorithm,
one can distinguish between two types of adversaries~\cite{BBKTW94}.
An \emph{adaptive online adversary} constructs the next input item (\eg, point) online,
based on the previous input items and previous actions of the algorithm.
In contrast, an \emph{oblivious adversary} must construct the entire input sequence in advance,
without having access to the actions of the algorithm. Obviously, an adaptive adversary is
more powerful, hence the competitive ratio against an adaptive adversary is greater or equal than
against an oblivious adversary.
Unless specified otherwise, upper bounds on the competitive ratio for randomized online algorithms
assume an adaptive adversary, and lower bounds an oblivious adversary.
Note, however, that for \emph{deterministic} online algorithms,
the competitive ratio is the same under both adversarial models.

\paragraph{Related previous work.}
Charikar~\etal~\cite{CCFM04} studied the online version of \textsc{Unit Covering}.
The points arrive one by one and each point needs to be assigned to a new or to an
existing unit ball upon arrival; the $L_2$ norm is used in $\RR^d$, $d\in \NN$.
The location of each new ball is fixed as soon as it is opened.
The authors provided a deterministic algorithm of competitive ratio $O(2^d d \log{d})$
and gave a lower bound of $\Omega(\log{d} / \log{\log{\log{d}}})$
on the competitive ratio of any deterministic online algorithm for this problem.

Recently, Dumitrescu, Ghosh, and T\'oth~\cite{DGT18} showed that the competitive ratio
of \texttt{Algorithm Centered} for online \textsc{Unit Covering} in $\RR^d$, $d\in \NN$,
under the $L_{2}$ norm is bounded by the Newton number\footnote{For a convex body
  $C \subset \RR^d$, the \emph{Newton number}
  (a.k.a.~\emph{kissing number}) of $C$ is the maximum number of nonoverlapping
  congruent copies of $C$ that can be arranged around $C$ so that they each touch
  $C$~\cite[Sec.~2.4]{BMP05}.} of the Euclidean ball in the same dimension.
In particular, this ratio is $O(1.321^d)$.
They also established a lower bound of $d+1$ for every $d \geq 1$ (and $4$ for $d=2$).

Chan and Zarrabi-Zadeh~\cite{CZ09} introduced the online  \textsc{Unit Clustering} problem.
Whereas the input and the objective of this problem are identical to those
for \textsc{Unit Covering}, this latter problem is more flexible in that
the algorithm is not required to produce unit balls at any time, but rather
the smallest enclosing ball of each cluster should have diameter \emph{at most} $1$;
moreover, a ball may change (grow or shift) in time.
The $L_\infty$ norm is used in $\RR^d$, $d\in \NN$.
The authors showed that several standard approaches for \textsc{Unit Clustering}, namely
\texttt{Algorithm Centered},
\texttt{Algorithm Grid}, and
\texttt{Algorithm Greedy},
all have competitive ratio at most $2$ for points on the line ($d=1$).
Moreover, the first two algorithms above are applicable for \textsc{Unit Covering},
with a competitive ratio at most $2$ for $d=1$, as well.

\smallskip
In fact, Chan and Zarrabi-Zadeh~\cite{CZ09} showed that no online algorithm
(deterministic or randomized) for \textsc{Unit Covering}
can have a competitive ratio better than $2$ in one dimension ($d=1$).
They also showed that it is possible to get better results
for \textsc{Unit Clustering} than for \textsc{Unit Covering}.
Specifically, they devised  the first algorithm with competitive ratio below $2$ for $d=1$,
namely a randomized algorithm with competitive ratio $15/8$; they further
improved this ratio to $11/6$~\cite{ZC09}.
Moreover, they developed a general method to achieve competitive ratio below $2^d$
in $\RR^d$ under the $L_{\infty}$ norm for any $d \geq 2$, by lifting
the one-dimensional algorithm to higher dimensions. In particular,
the existence of an algorithm for \textsc{Unit Clustering} with competitive ratio
$\rho_1$ for $d=1$ yields an algorithm with competitive ratio $\rho_d = 2^{d-1} \rho_1$
for every $d \geq 2$ for this problem.  The current best competitive ratio
for \textsc{Unit Clustering} in $\RR^d$, $2^{d-1} \frac53$ for every $d \geq 2$,
is obtained in exactly this way (by lifting the algorithm of Ehmsen and Larsen~\cite{EL13}).

A simple deterministic algorithm (\texttt{Algorithm Grid} below) that assigns points
to a predefined set of unit cubes that partition $\RR^d$ can be easily proven to be
$2^d$-competitive for both \textsc{Unit Covering} and \textsc{Unit Clustering}.
Since each cluster of $\opt$ can be split into at most $2^d$ grid-cell clusters created
by the algorithm, the competitive ratio of \texttt{Algorithm Grid} is at most $2^d$,
and this analysis is tight. See Fig.~\ref{fig:f3} for an example in the plane.

\begin{quote}
\texttt{Algorithm Grid.} Build a uniform grid in $\RR^d$ where cells are
unit cubes of the form \linebreak
$\prod_{j=1}^d \, [i_j,i_j+1)$, where $i_j \in \ZZ$ for $j=1,\ldots,d$.
For each new point $p$, if the grid cell containing $p$ is nonempty,
put $p$ in the corresponding cluster; otherwise open a new cluster for the grid cell
and put $p$ in it.
\end{quote}

\begin{figure}[htbp]
\begin{center}
  \includegraphics[scale=1.2]{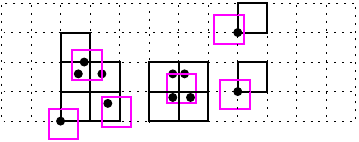}
\end{center}
\caption{Example for \texttt{Algorithm Grid} in the plane; here $\alg=11$ and $\opt=6$.}
\label{fig:f3}
\end{figure}

We summarize the current best online algorithms for \textsc{Unit Clustering}
in low dimensions; see Table~\ref{table:CompetitiveRatios}.
For $d=1$, the current best ratio, $5/3$, is due to Ehmsen and Larsen~\cite{EL13}
and is produced by a deterministic algorithm;
on the other hand, the current best lower bound for deterministic algorithms, $13/8$,
is due to Kawahara and Kobayashi~\cite{KK15}.
The current best lower bound for randomized algorithms, $3/2$, is due to
Epstein and van Stee~\cite{ES10}.

For $d=2$, the current best ratio, $10/3$, follows from lifting the algorithm of
Ehmsen and Larsen~\cite{EL13} from $d=1$ to $d=2$ by using the technique of
Chan and Zarrabi-Zadeh~\cite{CZ09} mentioned earlier.
The current best lower bound for deterministic algorithms, $13/6$,
is due to Ehmsen and Larsen~\cite{EL13}.
The current best lower bound for randomized algorithms, $11/6$, is due to
Epstein and van Stee~\cite{ES10}.

\begin{table}[htbp]
\begin{center}
\bgroup
\def\arraystretch{1.15}
\begin{tabular}{|l|l|l|l|l|}
\hline
 & \textsc{Unit Covering} & \textsc{Unit Covering} &\textsc{Unit Clustering} &  \textsc{Unit Clustering}\\
Dimension & lower bound & upper bound  & lower bound & upper bound \\
\hline \hline
$d=1$  & $2$~\cite{CCFM04} & $2$~\cite{CZ09}  & $13/8$~\cite{KK15} & $5/3$~\cite{EL13}\\
\hline
$d=2$  & 4 [$\star$] & $4$~\cite{CZ09}  & $13/6$~\cite{EL13} & $10/3$~\cite{CZ09,EL13}\\
\hline
$d\geq 3$ & $2^d$ [$\star$] & $2^d$ \cite{CZ09} & $\Omega(d)$ [$\star$] & $\frac{5}{3} \cdot 2^{d-1}$~\cite{CZ09,EL13}\\
\hline
\end{tabular}
\egroup
\end{center}
\caption{Current best bounds on the competitive ratio of deterministic algorithms for
  online \textsc{Unit Covering} and \textsc{Unit Clustering}
  in $\RR^d$ under $L_\infty$ norm for $d=1$, $d=2$, and $d\geq 3$.
  New results are labeled with [$\star$].}\label{table:CompetitiveRatios}
\end{table}

\paragraph{Notation and terminology.}
Throughout this paper the $L_\infty$ norm is used in $\RR^d$ ($d \geq 1$).
A~\emph{hyperrectangle} in $\RR^d$ is the Cartesian product of $d$ closed intervals $R=\prod_{i=1}^d[a_i,b_i]$,
where the lengths $b_i-a_i$ of the intervals, for $i=1,\ldots ,d$, are the \emph{extents} of $R$.
A hyperrectangle is a \emph{hypercube} (or \emph{cube}, for short) if all $d$ extents have the same length,
and a \emph{unit cube} if all $d$ extents have unit length.
For a vector $\mathbf{x}\in \RR^d$
and a set $S\subset \RR^d$, we denote by $\textbf{x}+S=\{\mathbf{x}+\mathbf{s}: \mathbf{s}\in S\}$
the translate of $S$ by vector $\mathbf{x}$. In particular, every unit cube in $\RR^d$ can be written
in the form $\mathbf{x}+[0,1]^d$ for some $\mathbf{x} \in \RR^d$.
For a random variable $X$, $\EE[X]$ denotes its expected value.

\paragraph{Contributions.} We obtain the following results:
\begin{enumerate}[(i)]

\item The competitive ratio of every online algorithm (deterministic or randomized)
for \textsc{Unit Clustering} in $\RR^d$ under the $L_{\infty}$ norm is $\Omega(d)$ for
every $d\geq 2$ (Theorem~\ref{thm:unit-clustering} in Section~\ref{sec:unit-clustering}).
We thereby give a positive answer to a question of Epstein and van Stee; specifically,
they asked whether the competitive ratio grows with the dimension~\cite[Sec.~4]{ES10}.
The question was reposed in~\cite[Sec.~7]{EL13}.

\item The competitive ratio of every deterministic online algorithm
for \textsc{Unit Covering} in $\RR^d$ under the $L_{\infty}$ norm is at least $2^d$ for every $d \geq 1$.
This bound cannot be improved; as such, \texttt{Algorithm Grid} is optimal
in this setting (Theorem~\ref{thm:2^d} in Section~\ref{sec:unit-covering-r^d}).
This generalizes a result by Chan and Zarrabi-Zadeh~\cite{CZ09} from $d=1$
to higher dimensions.

\item The competitive ratio of every deterministic online algorithm
for \textsc{Unit Covering} in $\ZZ^d$ under the $L_{\infty}$ norm is at least $d+1$ for every $d \geq 1$
(Theorem~\ref{thm:d+1} in Section~\ref{sec:unit-covering-z^d}).

\item We give a randomized algorithm with competitive ratio $O(d^2)$
for \textsc{Unit Covering} in $\ZZ^d$, $d\in \NN$, under the $L_{\infty}$ norm
(Theorem~\ref{thm:covering-integer} in Section~\ref{sec:unit-covering-z^d}).
The algorithm applies to \textsc{Unit Clustering} in $\ZZ^d$, $d\in \NN$,
with the same competitive ratio.

\item The competitive ratio of \texttt{Algorithm Greedy}
for \textsc{Unit Clustering} in $\RR^d$ under the $L_{\infty}$ norm is unbounded for every $d\geq 2$
(Theorem~\ref{thm:greedy} in Section~\ref{sec:greedy}).
The competitive ratio of \texttt{Algorithm Greedy}
for \textsc{Unit Clustering} in $\ZZ^d$ under the $L_{\infty}$ norm is at least $2^{d-1}$
and at most $2^{d-1}+\frac12$ for every $d \geq 2$
(Theorem~\ref{thm:greedy-integer} in Section~\ref{sec:greedy}).

\end{enumerate}

\paragraph{Broader Perspective.}
Several other variants of \textsc{Unit Clustering} have been studied in~\cite{ELS08}.
A survey of algorithms for \textsc{Unit Clustering} in the context of online algorithms
appears in~\cite{Ch08}; see also~\cite{Du18} for a review overview.
Clustering with variable sized clusters has been studied
in~\cite{CEIL13,DI13}. Grid-based online algorithms for clustering problems have
been developed by the same authors~\cite{DI14}.

\textsc{Unit Covering} is a variant of \textsc{Set Cover}. Alon~\etal~\cite{AAA+09}
gave a deterministic online algorithm of competitive ratio $O(\log{m} \log{n})$
for \textsc{Set Cover}, where $n$ is the number of possible points
(the size of the ground set) and $m$ is the number of sets in the family.
If every element appears in at most $\Delta$ sets, the competitive ratio of the algorithm
can be improved to $O(\log{\Delta} \log{n})$.
Buchbinder and Naor~\cite{BN09b} improved these competitive ratio to $O(\log m \log(n/\opt))$
and $O(\log{\Delta} \log{(n/\opt)})$, respectively, under the same assumptions.
For several combinatorial optimization problems (e.g., covering and packing), the classic
technique that rounds a fractional linear programming solution to an integer solution
has been adapted to the online setting~\cite{ABC+16,ABFP13,ACR17,BN09b,GN14}.

In these results, the underlying set system for the covering and packing problem must be finite:
The online algorithms and their analyses rely on the size of the ground set. For
\textsc{Unit Clustering} and \textsc{Unit Clustering} over infinite sets,
such as $\RR^d$ or $\ZZ^d$, these techniques could only be used after a suitable
discretization and a covering of the domain with finite sets, and it is unclear
whether they can beat the trivial competitive ratio of $2^d$ in a substantive way.

\section{Lower bound for online  \textsc{Unit Clustering}} \label{sec:unit-clustering}

In this section, we prove the following theorem.

\begin{theorem} \label{thm:unit-clustering}
The competitive ratio of every
(i)~deterministic algorithm, and (ii)~randomized algorithm,
for \textsc{Unit Clustering} in $\RR^d$ under the $L_{\infty}$ norm is $\Omega(d)$
for every $d\geq 1$.
\end{theorem}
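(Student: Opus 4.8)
The plan is to exploit the special structure of the $L_\infty$ norm: a finite set forms a single cluster if and only if its $L_\infty$-diameter is at most $1$, i.e.\ all pairwise distances are at most $1$. Consequently \opt\ equals the minimum number of unit cubes covering the input, and an online algorithm is exactly building, online and without merges, a partition of the points into \emph{cliques} of the distance-$\le 1$ graph; equivalently, it is performing an online coloring of the complementary distance-$>1$ graph $\bar G$, in which every color class (cluster) must be an independent set of $\bar G$. The desired $\Omega(d)$ will then come from forcing the algorithm to use $\Omega(d)$ colors while keeping $\chi(\bar G)=\opt=O(1)$.

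First I would design an adaptive adversary that confines all requests to two unit cubes $A$ and $B$, so that $\opt\le 2$ throughout: the points in $A$ are mutually within distance $1$ (one cluster covers them all) and likewise for $B$, so $\bar G$ is bipartite with parts $A,B$ and all conflicts (distance-$>1$ pairs) run across. The key local rule is that a newly presented point $p\in A$ can be appended to an existing cluster only if that cluster contains \emph{no} point of $B$ at distance $>1$ from $p$; hence, to force a fresh cluster for $p$, the adversary must ensure that every currently open cluster already holds a $B$-point far from $p$. I would bootstrap this by a recursive, coordinate-by-coordinate doubling in which coordinate $i$ implements ``level $i$'' of the recursion: one places $B$-points whose $i$-th coordinate differs by more than $1$ from a target group of $A$-points (hence far from them) while staying within distance $1$ of the complementary group (hence harmless there). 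Since one coordinate yields one clean far/close dichotomy, $d$ coordinates give a depth-$d$ recursion on $2^{\Theta(d)}$ points that, at every stage, poisons all previously opened clusters against the next point and thereby forces $\Omega(d)$ clusters. This is precisely the geometric realization, in dimension $d$, of the classical online-coloring lower bound that forces $\Omega(\log n)$ colors on a bipartite (forest) instance of $n=2^{\Theta(d)}$ vertices.

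For the randomized case I would invoke Yao's minimax principle: replace the adversary's adaptive branch choices by independent fair coins to obtain a fixed distribution over inputs for which $\opt=O(1)$ deterministically, and argue that against this oblivious distribution every deterministic algorithm is forced, in expectation over the coins, to open a new color at $\Omega(d)$ of the recursion levels. This yields the same $\Omega(d)$ bound for randomized algorithms.

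I expect the main obstacle to be the flexibility inherent to \textsc{Unit Clustering} (as opposed to \textsc{Unit Covering}): a cluster's cube is never committed by a single point and may grow and shift as points are added, so one cannot pin a cluster down from its first point the way one does for covering. The technical crux is to meet two competing demands at once: (a) keep all points inside $A\cup B$, so that $\opt=O(1)$; yet (b) make the across-parts far-ness pattern among the forced points rich enough to drive the doubling recursion using only the $d$ available coordinates, losing merely a constant factor per dimension so that the final bound is $\Omega(d)$ rather than, say, $O(\log d)$. Checking that each coordinate can be reused and renormalized so that earlier far/close decisions are not disturbed by later ones — while all the $B$-points introduced across the levels still fit in a single cube — is where the careful bookkeeping will lie.
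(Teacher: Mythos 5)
Your opening reduction is correct and worth keeping: under $L_\infty$, a set is a valid cluster iff its points are pairwise within distance $1$, so online \textsc{Unit Clustering} is exactly online coloring of the ``distance~$>1$'' conflict graph, and the Yao-style reduction for the randomized case is also fine in spirit (the paper does essentially that). The fatal gap is in the core of the plan: an adversary that keeps $\opt\le 2$ by confining all points to two unit cubes cannot run a depth-$d$ doubling recursion, because of the following geometric fact. \emph{Two distinct connected components of the conflict graph can never both contain a conflict in the same coordinate.} Indeed, suppose $p,q$ lie in one component with $q_c>p_c+1$ and $r,s$ lie in another with $s_c>r_c+1$; since cross-component pairs are non-adjacent in the conflict graph, they are within $L_\infty$-distance $1$, so $q_c\le r_c+1<s_c$ and $s_c\le p_c+1<q_c$, a contradiction. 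Now, inside a single connected component the bipartition is unique and computable by the algorithm, so the only ambiguity your adversary can exploit is how \emph{separate} components will later be merged and aligned --- this is precisely the mechanism of the classical $\Omega(\log n)$ bipartite online-coloring bound you cite. But that recursion needs $2^{\Omega(k)}$ simultaneously existing, pairwise-disjoint components (each already containing a conflict) in order to force $k$ colors, while the lemma above caps the number of such components at $d$, one per coordinate: the two sub-configurations merged at each level must lie in distinct components and hence consume \emph{disjoint} sets of separating coordinates, so the coordinate budget doubles per level rather than growing by one. Consequently your recursion has at most $\log_2 d+O(1)$ levels, and the whole approach tops out at $O(\log d)$ forced clusters. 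You flagged exactly this risk (``$\Omega(d)$ rather than, say, $O(\log d)$''), and it resolves against you.

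This is also why the paper's proof looks nothing like a two-cube construction: it makes $\opt$ enormous rather than constant. The paper's adversary presents the grid $[K]^d$ (so $\opt\approx(K/2)^d$) and then, over $\lfloor d/2\rfloor$ rounds, re-presents an $\eps$-perturbation of the grid in one \emph{fresh} coordinate per round. An averaging argument shows that clusters covering many points must have extent $1$ in most unperturbed coordinates, so a suitable (or random) choice of the fresh coordinate makes a constant fraction of them ``expire''; each cluster expires in at most one round, so $\Omega(\opt)$ clusters die per round and $\Omega(d\cdot\opt)$ die overall. Large $\opt$ is the essential resource: it gives the adversary $\Theta(d)$ independent injections of ambiguity, each destroying $\Omega(\opt)$ clusters, whereas with $\opt\le 2$ the lemma above limits the total supply of independent ambiguity to $d$ components for the entire game. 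If you want to salvage your construction, it plausibly does yield an $\Omega(\log d)$ lower bound (realize the classical merging recursion with one coordinate per leaf component), which would already answer Epstein and van Stee's qualitative question --- but it cannot reach the $\Omega(d)$ claimed in the theorem.
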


\begin{proof}
Let $\varrho$ be the competitive ratio of an online algorithm.
We may assume that $\varrho \leq d$, otherwise there is nothing to prove.
We may also assume that $d\geq 4$ since this is the smallest value for which
the argument gives a nontrivial lower bound.
Let $K$ be a sufficiently large even integer (that depends on $d$).

\paragraph{Deterministic Algorithm.}
We first prove a lower bound for a deterministic algorithm,
assuming without loss of generality an adaptive deterministic adversary.
We present a total of $\lfloor d/2\rfloor K^d$ points to the algorithm,
and show that it creates $\Omega(d \cdot \opt)$ clusters,
where $\opt$ is the offline minimum number of clusters for the final set of points.
Specifically, we present the points to the algorithm in $\lfloor d/2\rfloor$ rounds.
Round $i=1,\ldots , \lfloor d/2\rfloor$ consists of the following three events:
\begin{enumerate} [(i)] \itemsep -2pt
\item The adversary presents (inserts) a set $S_i$ of $K^d$ points;
  $S_i$ is determined by a vector $\sigma(i)\in \{-1,0,1\}^d$ to be defined later.
  We denote by $S_{\leq i}=\bigcup_{j=1}^i S_i$ the set of points presented so far.
\item The algorithm creates new clusters or expands existing clusters to cover $S_i$.
\item If $i< \lfloor d/2\rfloor$,
 the adversary computes $\sigma(i+1)$ from the clusters that cover $S_i$.
\end{enumerate}

In the first round, the adversary presents points of the integer lattice; namely
$S_1=[K]^d$, where $[K]=\{x\in \ZZ: 1\leq x\leq K\}$.
In round $i=2,\ldots ,\lfloor d/2\rfloor$, the point set $S_i$ will depend on the clusters
created by the algorithm in previous rounds.
We say that a cluster \emph{expires in round $i$}
if it contains some points from $S_i$ but no additional points can (or will) be added to it in any
subsequent round. We show that over $\lfloor d/2\rfloor$ rounds,
$\Omega(d\cdot \opt)$ clusters expire, which readily implies $\varrho= \Omega(d)$.

\paragraph{Optimal solutions.}
For $i=1,\ldots ,\lfloor d/2\rfloor$, denote by $\opt_i$ the offline optimum
for the set $S_{\leq i}$ of points presented up to round $i$.
Since $S_1=[K]^d$ and $K$ is even, $\opt_1=K^d/2^d$.
The optimum solution for $S_1$ is unique, and each cluster in the optimum solution
is a Cartesian product $\prod_{i=1}^d \{a_i,a_i+1\}$, where $a_i\in [K]$ is odd
for $i=1,\ldots,d$ (Fig.~\ref{fig:perturbation}(a)).

\paragraph{Near-optimal solutions.}
Consider $2^d-1$ additional near-optimal solutions for $S_1$ obtained by translating
the optimal clusters by a $d$-dimensional $0-1$ vector, and adding new clusters
along the boundary of the cube $[K]^d$.  We shall argue that the points inserted
in round $i$, $i\geq 2$, can be added to some but not all of these clusters.
To make this precise, we formally define these solutions for the integer grid $S_1$.
First we define an infinite set of hypercubes
$$\Q= \left\{\prod_{i=1}^d [a_i,a_i+1]: a_i\in \ZZ
\mbox{ \rm is odd for }i=1,\ldots ,d\right\}. $$
For any point set $S\subset \RR^d$ and a vector $\tau\in \{0,1\}^d$, let the clusters $C(S,\tau)$ be
the subsets of $S$ that lie in translates $Q+\tau$ of hypercubes $Q\in \Q$,
that is, let
$$ C(S,\tau)=\{ S\cap (Q+\tau): Q\in \Q\}.  $$

In general, the clusters $C(S,\tau)$ do not contain all points in $S$. However,
since $S_1$ is an integer grid, the clusters $C(S_1,\tau)$ contain all points in $S_1$
for every $\tau\in \{0,1\}^d$. See Fig.~\ref{fig:perturbation}(a--d) for examples.
Note that if $S\cap (Q+\tau)\neq\emptyset$ for some $Q\in \mathcal{Q}$, then
every point in $Q+\tau$ is within unit distance from $S$, and in particular,
$\ZZ^d\cap (Q+\tau)$ comprises $2^d$ points with coordinates in $\{0,1,\ldots , K+1\}$.
Consequently, the number of clusters in $C(S_1,\tau)$ is at most
$$\frac{(K+2)^d}{2^d} = \frac{K^d+O(dK^{d-1})}{2^d}
=\opt_1 \cdot \left(1+O\left(\frac{d}{K}\right)\right) = (1+o(1)) \, \opt_1,$$
if $K$ is sufficiently large with respect to $d$.

In round $i=2,\ldots ,\lfloor d/2\rfloor$, the point set $S_i$ is a perturbation of
the integer grid $S_1$ (as described below). Further, we will ensure
(cf.,~Observation~\ref{obs:1} below) that for every $i=1,\ldots , \lfloor d/2\rfloor$,
the point set $S_{\leq i}$ is covered by the clusters $C(S_1,\tau)$ for some $\tau\in \{0,1\}^d$.
In particular,  the final point set $S_{\leq \lfloor d/2\rfloor}$
is covered by the clusters $C(S_1,\tau)$ for some $\tau\in \{0,1\}^d$.
Consequently,
\begin{equation}   \label{eq:opt_i}
  \opt_i=\opt_1(1+o(1)) = (1+o(1)) \frac{K^d}{2^d}, \text{ for all } i=1,\ldots , \lfloor d/2\rfloor.
\end{equation}
At the end, we have $\opt=\opt_{\lfloor d/2\rfloor}= (1+o(1)) \frac{K^d}{2^d}$.

\paragraph{Perturbation.}
A perturbation of the integer grid $S_1$ is encoded by a vector $\sigma\in \{-1,0,1\}^d$,
that we call the \emph{signature} of the perturbation. Let $\eps\in (0,\frac{1}{2})$.
For an integer point $p=(p_1,\ldots , p_d)\in S_1$ and a signature $\sigma$,
the perturbed point $p'=(p_1',\ldots , p_d')$ is defined as follows;
see Fig.~\ref{fig:perturbation}(e--h) for examples in the plane:
For $j=1,\ldots , d,$ let
\begin{itemize}\itemsep -2pt
\item $p_j'=p_j$ when $\sigma_j=0$;
\item $p_j'=p_j+\eps$ if $p_j$ is odd, and $p_j'=p_j-\eps$ if $p_j$ is even when $\sigma_j=-1$;
\item $p_j'=p_j+\eps$ if $p_j$ is even, and $p_j'=p_j-\eps$ if $p_j$ is odd when $\sigma_j=1$.
\end{itemize}

For $i=2,\ldots ,\lfloor d/2\rfloor$, the point set $S_i$ is a perturbation of $S_1$
with signature $\sigma(i)\in \{-1,0,1\}^d$.
The signature of $S_1$ is $\sigma(1)=(0,\ldots ,0)$ (and so $S_1$ can be viewed as a
null perturbation of itself).
At the end of round $i=1,\ldots, \lfloor d/2\rfloor -1$, we compute $\sigma(i+1)$ from $\sigma(i)$
and the clusters that cover $S_i$ (as described below).
The signature $\sigma(i)$ determines the set $S_i$, for every $i=2,\ldots ,\lfloor d/2\rfloor$.
Note the following relation between the signatures $\sigma(i)$ and the clusters $C(S_i,\tau)$.

\begin{figure}[htpb]
\centering
\includegraphics[width=0.99\textwidth]{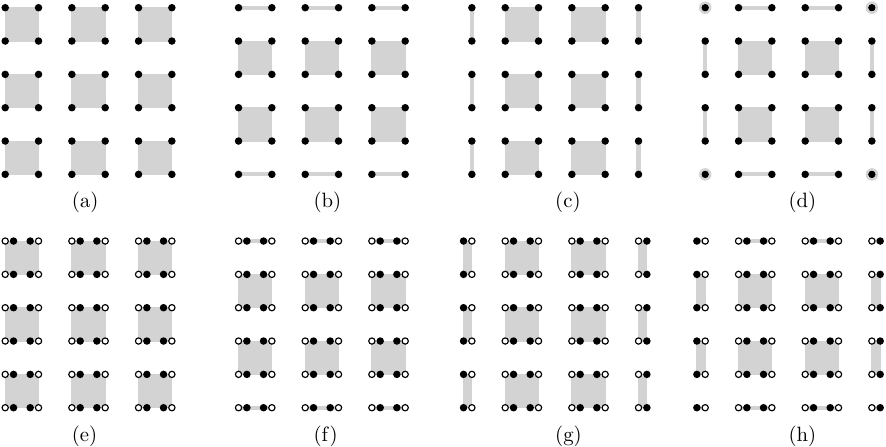}
\caption{(a) A $6\times 6$ section of the integer grid and $\opt_1=9$ clusters.
  (b--d) Near-optimal solutions $C(S_1,\tau)$ for $\tau=(0,1)$, $(1,0)$, and $(1,1)$.
  (e--f) The perturbation with signature $\sigma=(-1,0)$, and clusters $C(S,\tau)$
  for $\tau=(0,0)$ and $\tau=(0,1)$, where $S$ is the union of the perturbed points (full dots),
  and grid points (empty circles).
  (g--h) The perturbation with signature $\sigma=(1,0)$ and clusters $C(S,\tau)$
   for $\tau=(1,0)$ and $\tau=(1,1)$ and the same $S$.}
\label{fig:perturbation}
\end{figure}

\begin{observation}\label{obs:1}
Consider a point set $S_i$ with signature $\sigma(i)\in \{-1,0,1\}^d$.
The clusters $C(S_i,\tau)$ cover $S_i$ if and only if for all $j=1,\ldots d$,
\begin{itemize}\itemsep -2pt
\item $\sigma_j(i)=0$, or
\item $\sigma_j(i)=-1$ and $\tau_j=0$ (e.g., Fig.~\ref{fig:perturbation}(e-f) for $j=1$), or
\item $\sigma_j(i)=1$ and $\tau_j=1$ (e.g., Fig.~\ref{fig:perturbation}(g-h ) for $j=1$).
\end{itemize}
\end{observation}

In the sequence of signatures $\sigma(1),\ldots ,\sigma(\lfloor{d/2}\rfloor)$,
we always change one zero coordinate to a nonzero coordinate.
In particular any nonzero coordinate $\sigma_j(i)\in \{-1,1\}$ remains unchanged,
that is, $\sigma_j(i)=\sigma_j(i+1)=\ldots =\sigma(\lfloor{d/2}\rfloor)$.
Consequently, for all $i=1,\ldots , \lfloor{d/2}\rfloor$,
if the clusters $C(S_1,\tau)$ cover $S_i$ for some $\tau \in \{0,1\}^d$, they also cover $S_{\leq i}$.
In particular, the clusters $C(S_1,\tau)$ that cover $S_{\lfloor d/2\rfloor}$
also cover the final point set $S_{\leq \lfloor{d/2}\rfloor}$.

\paragraph{Adversary strategy.}
At the end of round $i=1,\ldots , \lfloor d/2\rfloor-1$,
we compute $\sigma(i+1)$ from $\sigma(i)$ by changing a 0-coordinate to $-1$ or $+1$,
based on the clusters created by the $\varrho$-competitive online algorithm.
Note that every point in $S_i$, $i=1,2,\ldots,\lfloor d/2\rfloor$, has
$i-1$ perturbed coordinates and $d+1-i$ unperturbed coordinates.
For all points in $S_i$, all unperturbed coordinates are integers.
The algorithm covers $S_i$ with at most $\varrho \cdot \opt_i$ clusters.
Let $\pi_i: \RR^d\rightarrow \RR^{d+1-i}$ be the orthogonal projection
to the subspace spanned by the $d+1-i$ unperturbed coordinate axes.
Then every point in $S_1$ and its corresponding perturbations in $S_2,\ldots, S_i$
project to the same point with integer coordinates in $\RR^{d+1-i}$.
This implies that $\pi_i(S_i)=\pi_i(S_1)\subset \ZZ^{d+1-i}$,
and the projection $\pi_i(C)$ of a cluster $C$ contains at most
$2^{d+1-i}$ points in $\pi_i(S_i)$, that is, $|\pi_i(C\cap S_i)| \leq 2^{d+1-i}$.
A cluster $C$ created by the algorithm is called
\begin{itemize}\itemsep -2pt
\item \emph{small} if $|\pi_i(C\cap S_i)| \leq \frac{2^{d+1-i}}{2\varrho}=\frac{2^{d-i}}{\varrho}$,
  and
\item \emph{big} otherwise.
\end{itemize}
Note that we distinguish between small and big clusters in round $i$ with respect to the $d+1-i$
unperturbed coordinates;
in particular, a small cluster in round $i$ may become large in another round, or vice versa.
As we shall see, small clusters contain few points and can be ignored for now.
Big clusters, however, are constrained by the points they contain, and they cannot expand
in some of the coordinate directions. The adversary will present points in the next round that cannot
be added to most of the big clusters, rendering these clusters useless in subsequent rounds.

The $L_{\infty}$-diameter of a cluster and any of its projections to a subspace spanned by coordinate axes is most 1.
Consequently, a small cluster contains at most $(2^{d-i}/\varrho)\cdot 2^{i-1}=2^d/(2\varrho)$
points of $S_i$. Indeed, for each small cluster $C$, the projection $\pi_i(C)$ contains at most $2^{d-i}/\varrho$ points
in $\pi_i(S_i)$. Each point in $\pi_i(S_i)$ is the image of $K^{i-1}$ points of $S_i$;
since $S_i$ is a perturbation of the integer grid, any cluster contains at most $2^{i-1}$ of these preimages.
The total number of points in $S_i$ that lie in small clusters is at most (see~\eqref{eq:opt_i})
$$ (\varrho\cdot \opt_i) \ \frac{2^d}{2\varrho}
= \opt_i\cdot 2^{d-1}
= \left(\frac{1}{2}+o(1)\right) K^d. $$
Consequently, the remaining $\left(\frac{1}{2}-o(1)\right) K^d$ points in $S_i$ are covered
by big clusters. As any unit cluster contains at most $2^d$ points in $S_i$, 
the number of big clusters is at least
\begin{equation}\label{eq:big}
\left(\frac{1}{2}-o(1)\right) \frac{K^d}{2^d}.
\end{equation}
For a cluster $C$, let $s(C)$ denote the number of unperturbed coordinates
in which its extent is $1$. Then the projection $\pi_i(C)$
contains at most $2^{s(C)}$ integer points in $\ZZ^{d+1-i}$, hence $|\pi_i(C\cap S_i)|\leq 2^{s(C)}$.
Comparing the lower and upper bounds on the cardinality of $\pi_i(C\cap S_i)$, for a big cluster $C$, yields
\begin{eqnarray}
2^{d-i}/\varrho &<& 2^{s(C)} \nonumber\\
d-i-\log_2 \varrho &< & s(C). \nonumber
\end{eqnarray}

Consider the following experiment:
choose one of the zero coordinates of the signature $\sigma(i)$ uniformly
at random (\ie, all $d+1-i$ choices are equally likely), and change it to $-1$ or $+1$
with equal probability $1/2$.
Observe that if the $j$-th extent of a cluster $C$ is $1$, then it cannot be expanded in dimension $j$.
We say that a big cluster $C$ \emph{expires} if no point can (or will) be added to $C$
in the future. Recall that $i\leq \lfloor d/2\rfloor$ and we assume that $\varrho\leq d$.
Consequently, a big cluster $C$ expires with probability at least
\begin{equation} \label{eq:frac}
\frac{s(C)}{d+1-i}\cdot \frac{1}{2}
> \frac{d-i-\log_2 \varrho}{2(d+1-i)}
\geq \frac{d-\lfloor d/2\rfloor-\log_2 d}{2d}
= \Omega(1).
\end{equation}
Combining \eqref{eq:big} and~\eqref{eq:frac}, the expected number of clusters that expire
is
\[ \Omega(1)\cdot \left(\frac{1}{2}-o(1) \right) \frac{K^d}{2^d} = \Omega(\opt) .\]
It follows that there exists an unperturbed coordinate $j$, and a perturbation of it such that
$\Omega(\opt)$ big clusters expire at the end of round $i=1,\ldots , \lfloor d/2\rfloor -1$.
The adversary derandomizes the above experiment: it precomputes all possible perturbations
(changing a 0-coordinate of $\sigma(i)$ to $-1$ or $+1$),
and makes the choice that maximizes the number of big clusters that expire in that round.

In round $i= \lfloor d/2\rfloor$, all clusters that cover any point in $S_{\lfloor d/2\rfloor}$
expire, because no point will be added to any of these clusters.
Since $S_{\lfloor d/2\rfloor}$ is a perturbation of $S_1$, at least
$\opt_1=\Omega(\opt)$ clusters expire in the last round, as well.

If a cluster expires in round $i$, then it contains some points of $S_i$ but
does not contain any point of $S_j$ for $j>i$.
Consequently, each cluster expires in at most one round, and
the total number of expired clusters over all $\lfloor d/2\rfloor$ rounds is $\Omega(d \cdot \opt)$.
Since each of these clusters was created by the algorithm in one of the rounds, we have
$\varrho \cdot \opt = \Omega(d\cdot \opt)$, which implies $\varrho = \Omega(d)$, as claimed.

\paragraph{Randomized Algorithm.}
We modify the above argument to establish a lower bound of $\Omega(d)$ for
a randomized algorithm with an oblivious (randomized) adversary.
The adversary starts with the integer grid $S_1=[K]^d$, with signature $\sigma(1)=\mathbf{0}$
as before. At the end of round $i=1,\ldots ,\lfloor d/2\rfloor-1$, it chooses
an unperturbed coordinate of $\sigma(i)$ uniformly at random, and switches it to $-1$ or $+1$
with equal probability (independently of the clusters created by the algorithm)
to obtain $\sigma(i+1)$.
By~\eqref{eq:frac}, the expected number of big clusters that expire in round $i$,
$1 \leq i<\lfloor d/2\rfloor$,  is $\Omega(\opt_i)=\Omega(\opt)$; and all
$(1-o(1))\opt_{\lfloor d/2\rfloor} = \Omega(\opt)$ big clusters expire in round $\lfloor d/2\rfloor$.
Consequently, the expected number of clusters created by the algorithm is $\Omega(d\cdot \opt)$,
which implies $\varrho = \Omega(d)$, as required.
\end{proof}

\section{Lower bound for online  \textsc{Unit Covering} in $\RR^d$} \label{sec:unit-covering-r^d}

The following theorem extends a result from~\cite{CZ09} from $d=1$ to higher dimensions.

\begin{theorem} \label{thm:2^d}
The competitive ratio of every deterministic online algorithm for \textsc{Unit Covering}
in $\RR^d$ under the $L_{\infty}$ norm is at least $2^d$ for every $d \geq 1$.
\end{theorem}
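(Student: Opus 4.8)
The plan is to exhibit, for each $d$, an adaptive adversary that forces $\alg\ge 2^d$ while keeping $\opt=1$; since $2^d/1=2^d$ this gives the claimed bound. Throughout, a unit ball is an axis-parallel unit cube $\prod_j[a_j,a_j+1]$, and $\opt=1$ will be guaranteed by arranging that \emph{all} presented points lie in one common unit cube. The base case $d=1$ is exactly the Chan--Zarrabi-Zadeh bound already quoted: present $0$; the algorithm must fix an interval $[a,a+1]\ni 0$ with $a\in[-1,0]$; then present any point of $(-1,1)\setminus[a,a+1]$, which is nonempty, lies within distance $<1$ of $0$ (so both points still share a unit interval, i.e.\ $\opt=1$), and is uncovered, forcing a second interval. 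Hence $\alg\ge 2=2^1$.

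For general $d$ I would present the $2^d$ vertices of an axis-parallel box one at a time, revealing the box coordinate by coordinate. After the first point (the origin) the algorithm commits a cube $C_1=\prod_j[a_j,a_j+1]$ with each $a_j\in[-1,0]$. I then process coordinates $j=1,\dots,d$, and at step $j$ I \emph{double} the current set of $2^{j-1}$ presented points: for each already-presented point $p$ (whose $j$-th coordinate is $0$) I present a twin $p'$ agreeing with $p$ in every coordinate except the $j$-th, where I place it at a value $t$ chosen inside a common length-$1$ window $[\lambda_j,\lambda_j+1]\ni 0$ (this keeps every $j$-th coordinate inside one unit-length window, preserving $\opt=1$) but \emph{outside} the $j$-th range of every already-open cube that contains $p$ in all the other coordinates. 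When those ranges do not exhaust $[\lambda_j,\lambda_j+1]$, such a $t$ exists; I try to arrange this by first choosing $\lambda_j\in[-1,0]$ distinct from the finitely many relevant left endpoints. Each twin is then outside all previously opened cubes, so the algorithm is forced to open a fresh cube; after step $d$ it has opened at least $2^d$ cubes, while all $2^d$ points lie in the single unit cube $\prod_j[\lambda_j,\lambda_j+1]$.

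The main obstacle is precisely the step that must guarantee a valid displacement $t$ at \emph{every} doubling. The cubes one has to dodge are those covering the base point $p$ in all coordinates other than $j$; if two of them have opposite orientation in coordinate $j$ (one occupying $[-1,0]$, the other $[0,1]$), their union is all of $[-1,1]$ and no admissible window can avoid both, so the forcing collapses. Because the $\opt=1$ requirement forces consecutive points to be within distance $<1$, a single unit cube can cover two distinct presented points, and the algorithm cannot be prevented from such \emph{double coverings}; it is exactly these that let the set of ``dangerous'' cubes grow from round to round. The crux is therefore to supplement the construction with an inductive invariant that keeps the committed cubes in canonical position, so that every future point has an essentially \emph{unique} blocking cube. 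I would try to secure this by splitting off one coordinate as an \emph{outer} one-dimensional instance (dual to the dimension lifting of~\cite{CZ09}) and recursing on the remaining $d-1$ coordinates, stacking a second copy of the $(d-1)$-dimensional construction at a height placed on the side used by fewer committed cubes. Proving that this recursion simultaneously preserves $\opt=1$ and the ``unique blocker'' property, and hence yields the clean doubling $2^{d-1}\to 2^d$, is the technical heart of the argument; the remaining ingredients---existence of an uncovered admissible point at each step, and $\opt=1$ via the common windows---are then routine.
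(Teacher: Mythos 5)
Your $d=1$ base case is fine, but for $d\geq 2$ the proposal is not a proof: the step on which everything hinges --- that at every doubling there exists an admissible uncovered twin $t$ --- is exactly the step you concede is unresolved, and it genuinely fails against an adaptive algorithm. Concretely, in $d=2$: you present $(0,0)$ and the algorithm covers it with a cube $U_1$ whose second-coordinate range is $[0,1]$; you present the twin $(t,0)$ with $t$ outside the first-coordinate range of $U_1$; the algorithm now covers it with a cube $U_2$ whose first-coordinate range contains \emph{both} $t$ and $0$ (possible since $|t|<1$) and whose second-coordinate range is $[-1,0]$. Then every candidate twin $(0,s)$ of the base point with $|s|\leq 1$ is already covered by $U_1$ or $U_2$, so the doubling of $(0,0)$ in the second coordinate is blocked after only two cubes. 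Your sketched repair (an outer one-dimensional coordinate, recursion on $d-1$ coordinates, stacking on ``the side used by fewer committed cubes'') does not close this: since your candidate points are determined in advance, a single algorithm cube can serve as a blocker for many future twins at once, and the ``unique blocker'' invariant you would need is never formulated, let alone proved. As written, the argument establishes the theorem only for $d=1$.

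The missing idea --- and the way the paper's proof defeats precisely this double-covering defense --- is to \emph{not} fix the candidate points at all. The paper's adversary maintains a growing sequence of nested cubes $Q_1\subset Q_2\subset\cdots$ of side lengths $x_i=1-2^{1-2i}$ increasing to $1$, and at step $i$ presents an arbitrary uncovered vertex of $Q_i$. After the algorithm answers with a unit cube $U_i$, the adversary enlarges $Q_i$ to $Q_{i+1}$, anchoring it at the unique vertex (if any) that $U_i$ covers \emph{deeply}, i.e., to depth more than $(1-x_i)/2$. Two key facts replace your missing invariant: (a) $U_i$ can deeply cover at most one vertex of $Q_i$, since two such vertices would force $U_i$ to have extent larger than $x_i+2\cdot\frac{1-x_i}{2}=1$ in some coordinate; and (b) any vertex covered only shallowly escapes $U_i$ when the box grows, because $x_{i+1}-x_i=\frac{3}{2}\cdot\frac{1-x_i}{2}>\frac{1-x_i}{2}$. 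Hence each algorithm cube permanently eliminates at most one candidate, the invariant ``at least $2^d-i+1$ uncovered vertices of $Q_i$ before step $i$'' survives, $2^d$ cubes are forced, and all points lie in the final cube of side $<1$, so $\opt=1$. In short: your diagnosis of the obstacle is correct, but overcoming it requires the growing-box/deep-coverage mechanism (or something equivalent), not a refinement of the fixed-grid doubling you propose.
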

Recall that \texttt{Algorithm Grid} attains a competitive ratio of $2^d$.
As such, \texttt{Algorithm Grid} is optimal in this setting, and
the lower bound in Theorem~\ref{thm:2^d} cannot be improved.
\begin{proof}
Consider a deterministic online algorithm $\alg$ for \textsc{Unit Covering}
in $\RR^d$. We present an input instance $\sigma$ for which the solution
$\alg(\sigma)$ is at least $2^d$ times $\opt(\sigma)$.
In particular, $\sigma$ consists of $2^d$ points in $\mathbb{R}^d$ that fit in a unit cube,
hence $\opt(\sigma)=1$, and we show that $\alg$ is required to place
a new unit cube for each point in $\sigma$.
Our proof works like a two player game between Alice and Bob. Here,
Alice is presenting points to Bob, one at a time. If a new cube is required,
Bob (who plays the role of the algorithm) decides where to place it.
Alice tries to force Bob to place as many new cubes as possible by presenting the points
in a smart way. Bob tries to place new cubes in a way such that they may cover other points
presented by Alice in the future, thereby reducing the need of placing new cubes quite often.

Throughout the game, Alice maintains a sequence of axis-aligned cubes $Q_1 \subset Q_2 \subset \ldots$,
each of side-length less than $1$, and Bob places axis-aligned cubes $U_1,U_2,\ldots$
to cover points presented by Alice.

Let $Q_0=U_0 = \emptyset$.
In step $i$, $i=1,\ldots,2^d$, Alice obtains $Q_i$ from $Q_{i-1}$,
where $Q_{i-1} \subset Q_i$.
More precisely, $Q_i$ is obtained by scaling up (slightly) $Q_{i-1}$ from a vertex.
This transformation defines a one-to-one correspondence between the vertices of $Q_i$ and $Q_{i+1}$
(as in Lemma~\ref{lem:exposed} that follows).
Alice then presents an arbitrary vertex of $Q_i$ that is not covered as the next point $p_i \in \sigma$,
and Bob covers it by placing the unit cube $U_i$.

For every $i\in \mathbb{N}$, let $\delta_i=2^{-2i}$ and $x_i=1-2\delta_i$
(in particular, $x_1=1/2$, $x_2=7/8$, and $x_3=31/32$).
For $i=1,\ldots , 2^d$, the side length of cube $Q_i$ is equal to $x_i$.
Note that $(x_i)_{i\in \mathbb{N}}$ is a strictly increasing sequence converging to $1$.

In step~1, Alice chooses $Q_1$ as an arbitrary cube of side-length $x_1$,
and the first point $p_1$ as an arbitrary vertex of $Q_1$.
Next, Bob places $U_1$ to cover $p_1$.
The remaining points $p_i$, for $i=2,\ldots,2^d$, in $\sigma$ are chosen adaptively,
depending on Bob's moves.

By the end of step~$i$, for $i=1,\ldots,2^d$, Alice has placed points $p_1,\ldots, p_i$,
and Bob has placed unit cubes $U_1,\ldots ,U_i$ (one for each of these points).
An illustration of the planar version of the game appears in Fig.~\ref{fig:f1}.

\begin{figure}[htpb]
\centering
\includegraphics[scale=0.6]{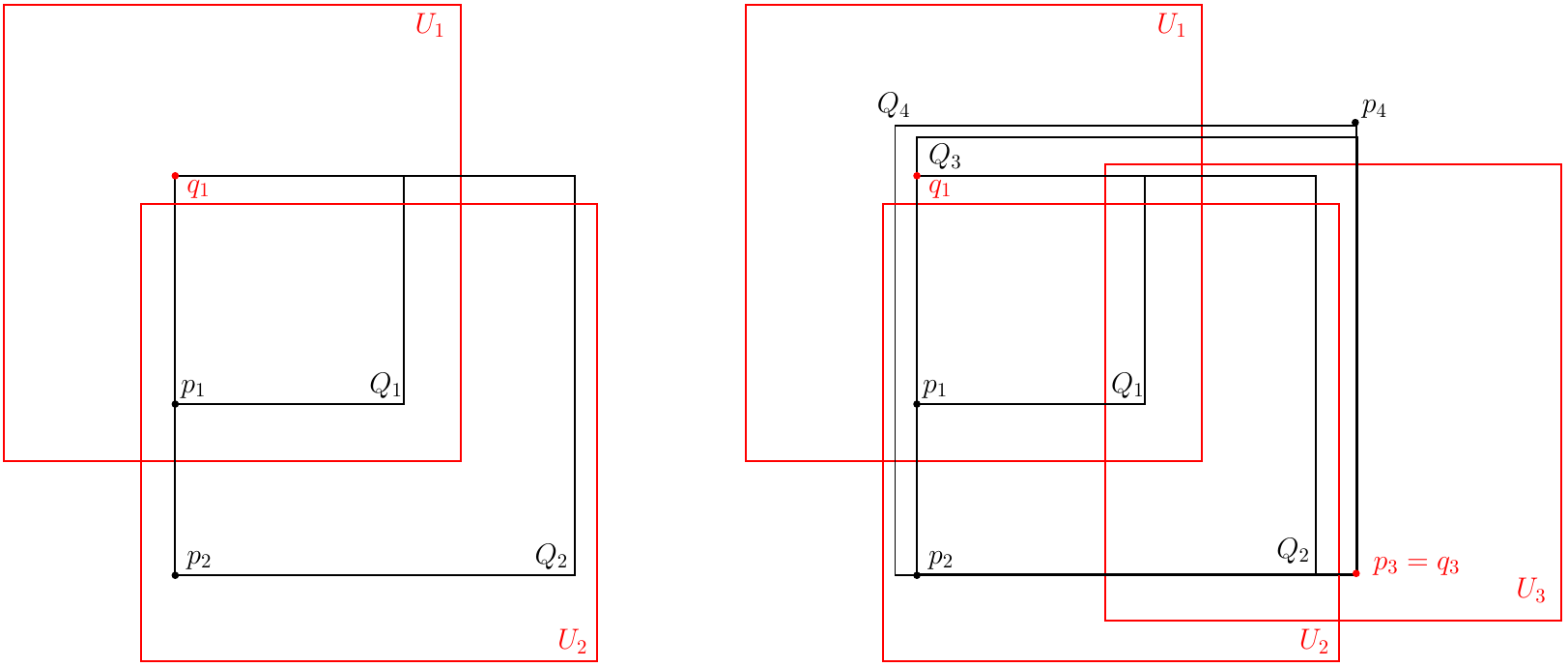}
\caption{A lower bound of $2^d$ on the competitive ratio.
  The figure illustrates the case $d=2$.
  Left: The first two points in $\sigma$ arrive. Right: the last two points in $\sigma$ arrive.
  The cubes placed by Bob ($U_1,U_2,U_3$)
  and the vertices that are deeply covered ($q_1$ and $q_3$)
  are colored in red.}
\label{fig:f1}
\end{figure}		

A vertex $v_i$ of $Q_i$ is said to be  \emph{covered} at time $t$
if $v_i$ is contained in the union of the cubes placed before time $t$.
Otherwise, $v_i$ is \emph{exposed}  (\ie, \emph{not covered}) at time $t$.
Note that in step~$1$, all $2^d$ vertices of $Q_1$ are exposed until Bob places $U_1$.

We maintain the following two invariants for $i=1,\ldots,2^d$:

\begin{enumerate}[(I)] \itemsep 1pt
\item \label{inv:1} the cube $Q_i$ contains the points $p_1,\ldots,p_i$;
\item \label{inv:2} the cube $Q_i$ has at least $2^d-i+1$ exposed vertices
       until $U_i$ is placed in step $i$
       (\ie, the union $\bigcup_{j<i} U_j$ contains at most $i-1$ vertices of $Q_i$).
\end{enumerate}
Invariant~\eqref{inv:2} ensures that Alice can present an exposed vertex of $Q_i$ in steps $i=1,\ldots,2^d$.
An exposed vertex $v_i$ of $Q_i$ is said to be \emph{deeply covered} by $U_i$ in step $i$
if $v_i$ is contained in $U_i$ and its distance from the boundary of $U_i$ is  larger than $\delta_i=(1-x_i)/2$;
\ie, $v_i\in U_i$ and $\dist(v_i, \partial U_i)>\delta_i$.
As we shall see, a deeply covered vertex helps producing exposed vertices in the next round.

\begin{lemma} \label{lem:deep}
  For $i \in \{1,\ldots,2^d-1\}$, at most one exposed vertex of $Q_i$ is deeply covered by
  $U_i$ in step $i$.
\end{lemma}
\begin{proof}
  Assume, to the contrary, that $u_i$ and $v_i$ are two exposed vertices of $Q_i$
  (in step $i$), that are deeply covered by $U_i$. Since $u_i$ and $v_i$ differ in at least
  one coordinate, the extent of $U_i$ in that coordinate is larger than
  $$ x_i + \frac{2(1-x_i)}{2} =1,$$
  which is a contradiction.
\end{proof}

If no exposed vertex of $Q_i$ is deeply covered by $U_i$, let
$Q_{i+1}$ be the unique axis-aligned cube of side length $x_{i+1}$
that contains $Q_i$ and has $p_i$ as a vertex (\ie, $Q_{i+1}$ is obtained by scaling up $Q_i$ from $p_i$).
Otherwise, let $q_i$ be the unique exposed vertex of $Q_i$
that is deeply covered by $U_i$ (possibly, $q_i=p_i$);
and let $Q_{i+1}$ be the unique axis-aligned cube of side length $x_{i+1}$
that contains $Q_i$ and has $q_i$ as a vertex (\ie, $Q_{i+1}$ is obtained by scaling up $Q_i$ from $q_i$).

\begin{lemma} \label{lem:exposed}
For $i \in \{1,\ldots,2^d-1\}$,
let $v_i$ be an exposed vertex of $Q_i$ in step $i$ that is
not deeply covered by $U_i$, and not the common vertex of $Q_i$ and $Q_{i+1}$.
Let $v_{i+1}$ be the vertex of $Q_{i+1}$ corresponding to $v_i$.
Then $v_{i+1}$ is not covered by $\bigcup_{j=1}^i U_j$.
\end{lemma}
\begin{proof}
  We claim that $v_{i+1}$ is not covered by $U_i$. First assume that $v_i\notin U_i$.
  We obtain $v_{i+1}$ from $v_i$ by scaling up $Q_i$ from a vertex ($p_i$ or $q_i$) in $U_i$.
  Since $U_i$ is convex, both $v_i$ and $v_{i+1}$ are in the exterior of $U_i$,
  and so $v_{i+1}$ is not covered by $U_i$.  Next assume that $v_i\in U_i$.
  Since $v_i$ is not deeply covered by $U_i$, its distance to the boundary of $U_i$ is at most $\delta_i$.
  By construction, there exist parallel faces of $Q_i$ and $Q_{i+1}$ that are incident to $v_i$ and $v_{i+1}$,
  respectively, but are not incident to the common vertex of $Q_i$ and $Q_{i+1}$. Two such faces are at distance
\[ x_{i+1} - x_i = 2(\delta_i - \delta_{i+1}) = 3 \cdot 2^{-(2i+1)} = \frac {3\delta_i}{2} > \delta_i \]
  from each other. This implies that $v_{i+1}$ is not covered by $U_i$.

  Since $v_i$ was not covered by the union of previous cubes $\bigcup_{j<i}U_j$ and
  all previous cubes intersect $Q_i$, it follows that the ray $\overrightarrow{v_i v_{i+1}}$
  does not intersect any previous cube $U_j$, $j<i$.
In particular, $v_{i+1}$ is not covered by $\bigcup_{j<i}U_j$.
As such, $v_{i+1} \not \in U_i$, hence $v_{i+1} \not \in \bigcup_{j=1}^i U_j$, as claimed.
\end{proof}

Invariant~\eqref{inv:1} follows inductively, by construction.
Invariant~\eqref{inv:2} follows inductively from
Lemmata~\ref{lem:deep} and~\ref{lem:exposed}.
By Invariant~\eqref{inv:2},
Alice can choose an exposed vertex of $Q_i$ in steps $i=1,\ldots,2^d$,
and Bob is required to place a new cube after each of these $2^d$ points,
hence $\alg(\sigma)=2^d$.
Invariant~\eqref{inv:1} yields $\opt(\sigma)=1$.
This completes the proof of Theorem~\ref{thm:2^d}.
\end{proof}

\section{Online \textsc{Unit Covering} in $\ZZ^d$} \label{sec:unit-covering-z^d}

\paragraph{Lower bound.}
When we consider unit covering over the integer lattice $\mathbb{Z}^d$, $d\geq 1$, the
adversary has fewer choices, and the $2^d$ lower bound of Theorem~\ref{thm:2^d} no longer
applies. Here we prove a lower bound that is linear in $d$.

\begin{theorem} \label{thm:d+1}
The competitive ratio of every deterministic online algorithm for \textsc{Unit Covering}
in $\ZZ^d$ under the $L_{\infty}$ norm is at least $d+1$ for every $d\geq 1$.
\end{theorem}
\begin{proof}
We construct an input sequence $p_1,\ldots,p_{d+1} \in \ZZ^d$ for which $\opt=1$ and $\alg=d+1$
using an adaptive adversary. We construct such a sequence inductively, so that
\begin{itemize} \itemsep -2pt
\item each new point $p_i$ requires a new cube, $Q_i \subset \RR^d$, and
\item all points presented can be covered by one integer unit cube incident to the origin.
\end{itemize}

Let $x_1,\ldots,x_d$ be the $d$ coordinate axes in $\RR^d$; and $x_{d+1}$ be the new axis in $\RR^{d+1}$.
The induction basis is $d=1$. We may assume for concreteness that $p_1=0$,
and suppose that the algorithm opens a unit interval $[x,x+1]$ to cover this point.
If $x=-1$, let $p_2=1$, else let $p_2=-1$. The algorithm now opens a new unit interval to cover $p_2$.
It is easily seen that $p_1,p_2 \in \ZZ$ and $\{p_1,p_2\}$ define a unit interval.

For the induction step, assume the existence of a sequence $\sigma=p_1,\ldots,p_{d+1} \in \ZZ^d$
that forces the algorithm to open a new unit cube, $Q_i \subset \RR^d$, to cover each new point $p_i$,
$i=1,\ldots,d+1$ (and so $\alg=d+1$), while $\opt=1$ with $\sigma$ being covered by a single cube
$U_d \subset \ZZ^d$.
Present a sequence of $d+2$ points to the algorithm in $\RR^{d+1}$;
the first $d+1$ points of this sequence are: $(p_1,0),\ldots,(p_{d+1},0)$.
The algorithm must use $d+1$ cubes,
say, $Q_1,\ldots,Q_{d+1} \subset \RR^{d+1}$ to cover these points. As such,
the $d+1$ unit cubes $\pi(Q_1),\ldots,\pi(Q_{d+1}) \subset \RR^d$, cover
$p_1,\ldots,p_{d+1} \in \ZZ^d$, where $\pi(Q_i)$ is the projection onto the first
$d$ coordinates of $Q_i$;
moreover, the unit cubes $\pi(Q_1),\ldots,\pi(Q_d)$ do not cover $p_{d+1}$.
Only $\pi(Q_d)$ contains $p_{d+1}$, but the cube $Q_d$ cannot contain
both $(p_{d+1},-1)$ and $(p_{d+1},1)$. Consequently, $(p_{d+1},-1)$ or $(p_{d+1},1)$
is not covered by $\bigcup_{i=1}^{d+1} Q_i$. The adversary presents such an exposed point,
which requires a new cube $Q_{d+2}$.
(Note that the points $p_1,\ldots,p_{d+2}$ form a lattice path, where $p_i$ and $p_{i+1}$
differ in the $(i+1)$-th coordinate.)
This completes the inductive step, and thereby the proof of the theorem.
\end{proof}

\paragraph{Upper bound.}
We substantially improve on the $2^d$ upper bound on the competitive ratio of
\textsc{Unit Covering} in $\RR^d$ (achieved by {\tt Algorithm Grid})
when the input points are in $\ZZ^d$
(or the $2^{d-1}+\frac12$ upper bound of {\tt Algorithm Greedy} in Section~\ref{sec:greedy}).

The online algorithm by Buchbinder and Naor~\cite{BN09b} for \textsc{Set Cover},
for the unit covering problem over $\ZZ^d$, yields an algorithm with $O(d \log{(n/\opt)})$
competitive ratio under the assumption that a set of $n$ possible integer points
is given in advance. Recently, Gupta and Nagarajan~\cite{GN14} gave an online randomized
algorithm for a broad family of combinatorial optimization problems that can be expressed
as sparse integer programs. For unit covering over the integers in $[n]^d$, their results
yield a competitive ratio of $O(d^2)$, where $[n]=\{1,2,\ldots,n\}$. The competitive ratio
does not depend on $n$, but the algorithm must know $n$ in advance.
We remark that \emph{if} the algorithm is allowed to maintain several candidate solutions
and return a best candidate at termination (which is customary in data stream models~\cite{LLR18}),
then this approach combined with a standard randomized shifting method~\cite{HM85}
would yield a competitive ratio of $O(d^2)$. However, in the online model we consider here,
the algorithm can maintain only one solution, and this approach is no longer viable.

We now remove the dependence on $n$ so as to get a truly online algorithm for
\textsc{Unit Covering} over $\ZZ^d$. Consider the following randomized algorithm.

\begin{quote}
\texttt{Algorithm Iterative Reweighing.}
Let $P\subset \ZZ^d$ be the set of points presented to the
algorithm and $\C$ the set of cubes chosen by the algorithm;
initially $P=\C=\emptyset$.
The algorithm chooses cubes for two different reasons,
and it keeps them in sets $\C_1$ and $\C_2$, where $\C=\C_1\cup \C_2$.
It also maintains a third set of cubes, $\B$, for bookkeeping purposes;
initially $\B=\emptyset$.
In addition, the algorithm maintains a weight function on all integer unit cubes.
Initially $w(Q)=2^{-(d+1)}$ for all integer unit cubes
(this is the default value for all cubes that are disjoint from $P$).

We describe one iteration of the algorithm.
Let $p\in \ZZ^d$ be a new point; put $P\leftarrow P\cup \{p\}$.
Let $\Q(p)$ be the set of $2^d$ integer unit cubes that contain $p$.
\begin{enumerate}\itemsep -2pt
\item If $p\in \bigcup \C$, then do nothing.
\item Else if $p\in \bigcup \B$, then let $Q\in \B\cap \Q(p)$ be an arbitrary cube
  and put $\C_1 \leftarrow \C_1\cup \{Q\}$.
\item\label{setp:3} Else if $\sum_{Q\in \Q(p)} w(Q) \geq 1$, then let $Q$ be an arbitrary cube
  in $\Q(p)$ and put $\C_2\leftarrow \C_2\cup \{Q\}$.
\item\label{step:4} Else, the weights give a probability distribution on $\Q(p)$.
  Successively choose cubes from $\Q(p)$ at random with this distribution
  in $2d$ independent trials and add them to $\B$.
  Let $Q \in \B\cap \Q(p)$ be an arbitrary cube and put $\C_1 \leftarrow \C_1\cup \{Q\}$.
  Double the weight of every cube in $\Q(p)$.
\end{enumerate}
\end{quote}

\begin{theorem} \label{thm:covering-integer}
  The competitive ratio of \texttt{Algorithm Iterative Reweighing} for \textsc{Unit Covering}
  in $\ZZ^d$ under the $L_{\infty}$ norm is $O(d^2)$ for every $d\in \NN$.
\end{theorem}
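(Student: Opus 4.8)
The plan is to split the cost $|\C|=|\C_1|+|\C_2|$ and bound the two parts by different mechanisms: $|\C_1|$ deterministically and $\EE[|\C_2|]$ probabilistically. Throughout I use that for points in $\ZZ^d$ we may assume the optimum uses only integer unit cubes: if a set of integer points fits in a unit cube, then in each coordinate their values span an interval of length $\le 1$, so they all lie in a common cube of the integer grid. Fix one such optimal cover $\mathcal O$ with $|\mathcal O|=\opt$, and for each presented point $p$ let $O_p\in \mathcal O\cap \Q(p)$ be an optimal cube covering it.

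First I would bound $|\C_1|$. Every cube placed in $\C_1$ (in Step~2 or Step~4) is taken from $\B$, so $\C_1\subseteq \B$ and $|\C_1|\le |\B|$. Each execution of Step~4 inserts at most $2d$ cubes into $\B$, hence $|\B|\le 2d\,N_4$, where $N_4$ is the number of Step-4 events. To bound $N_4$, observe that a Step-4 event for a point $p$ doubles the weight of every cube of $\Q(p)$, in particular of the optimal cube $O_p$. Since Step~4 requires $\sum_{Q\in\Q(p)}w(Q)<1$, each individual weight is $<1$ when doubled; as weights take the values $2^{-(d+1)}2^j$ and start at $2^{-(d+1)}$, each fixed cube is doubled at most $d+1$ times before reaching weight $1$. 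Thus the total number of doublings of optimal cubes is at most $(d+1)\opt$, while each Step-4 event contributes at least one such doubling (that of $O_p$). Therefore $N_4\le (d+1)\opt$ and $|\C_1|\le 2d(d+1)\opt=O(d^2)\opt$ deterministically.

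The heart of the argument is the probabilistic bound $\EE[|\C_2|]=O(\opt)$, via the key lemma that every presented point $p$ triggers Step~3 with probability at most $e^{-d}$. Reaching Step~3 forces $\sum_{Q\in\Q(p)}w(Q)\ge 1$ and $\B\cap\Q(p)=\emptyset$. Each time a cube $Q$ is doubled (in a Step-4 event for some $p'$ with $Q\in\Q(p')$), the algorithm draws $2d$ independent samples in which $Q$ is hit with probability $w(Q)/\Sigma_{p'}\ge w(Q)$, where $\Sigma_{p'}=\sum_{Q'\in\Q(p')}w(Q')<1$ and $w(Q)$ is the pre-doubling weight. Summing these probabilities over the at most $d+1$ doublings of $Q$ telescopes to a total \emph{sampling mass} at least $w(Q)-2^{-(d+1)}$, and summing over $\Q(p)$ gives total mass at least $\big(\sum_{Q}w(Q)\big)-2^{d}\cdot 2^{-(d+1)}\ge 1-\tfrac12=\tfrac12$. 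Since each Step-4 event performs $2d$ independent trials, the probability that no trial ever lands in $\Q(p)$ is at most $\prod_{\text{events}}\big(1-\tfrac{\sum_{\Q(p)\cap\Q(p')}w}{\Sigma_{p'}}\big)^{2d}\le \exp(-2d\cdot\tfrac12)=e^{-d}$, which bounds $\Pr[\text{Step 3 at }p]$. To finish, charge Step-3 events to optimal cubes: assigning each point $p$ to $O_p$ and using that an optimal cube contains at most $2^d$ integer points, at most $2^d$ points are assigned to each cube of $\mathcal O$, so by linearity $\EE[|\C_2|]\le \sum_p \Pr[\text{Step 3 at }p]\le \opt\cdot 2^d e^{-d}=(2/e)^d\opt=O(\opt)$. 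Combining with the bound on $|\C_1|$ yields $\EE[|\C|]=O(d^2)\opt$, i.e.\ competitive ratio $O(d^2)$.

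The step I expect to be the main obstacle is making the per-point probability bound fully rigorous, because the weight function itself depends on the algorithm's past random choices (whether earlier points fell into Steps~2--4 depends on the random set $\B$), so one cannot simply condition on the weights. The fix is to reveal the random samples in chronological order and argue via the tower property: the $2d$ samples of each Step-4 event are independent of the history preceding that event, so the survival probability ``no sample so far has hit $\Q(p)$'' is a supermartingale whose one-step factors are exactly the $\big(1-\sum_{\Q(p)\cap\Q(p')}w/\Sigma_{p'}\big)^{2d}$ terms above, and the accumulated exponent is controlled by the sampling-mass computation regardless of the (random) doubling schedule. The remaining parts of the proof are deterministic and routine.
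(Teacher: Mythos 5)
Your proposal is correct and follows essentially the same route as the paper's proof: the same decomposition $\C=\C_1\cup\C_2$ with $\C_1\subseteq\B$, the same deterministic charging of Step-4 events to the $O(d)$ possible doublings of each optimal cube (giving $|\B|=O(d^2)\,\opt$), the same telescoping-weight argument showing each point triggers Step~3 with probability at most $e^{-d}$, and the same final count via $|P|\leq 2^d\,\opt$, so that $\EE[|\C_2|]\leq (2/e)^d\,\opt$. The only substantive difference is presentational: your supermartingale/tower-property remark rigorously handles the adaptivity of the weights (which the paper passes over with ``by independence''), a worthwhile clarification but not a different proof.
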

\begin{proof}
Suppose that a set $P$ of $n$ points is presented to the algorithm sequentially,
and the algorithm created unit cubes in $\C=\C_1\cup \C_2$. Note that $\C_1 \subseteq \B$.
We show that $\EE[|\B|]= O(d^2\cdot \opt)$ and $\EE[|\C_2|]=O(\opt)$.
This immediately implies
that $\EE[|\C|] \leq \EE[|\C_1|] + \EE[|\C_2|] \leq \EE[|\B|] + \EE[|\C_2|] =O(d^2\cdot \opt)$.

First consider $\EE[|\B|]$. New cubes are added to $\B$ in step~\ref{step:4}. In this case,
the algorithm places at most $2d$ cubes into $\B$, and doubles the weight of all $2^d$ cubes
in $\Q(p)$ that contain $p$. Let $\C_{\opt}$ be an offline optimum set of unit cubes.
Each point $p\in P$ lies in some cube $Q_p \in \C_{\opt}$.
The weight of $Q_p$ is initially $2^{-(d+1)}$, and it never exceeds 2; indeed,
since $Q_p \in \Q(p)$, its weight before the last doubling must have been at most $1$
in step~\ref{step:4} of the algorithm; thus its weight is doubled in at most $d+2$ iterations.
Consequently, the algorithm invokes step~\ref{step:4} in at most $(d+2) \opt$ iterations.
In each such iteration, it adds at most $2d$ cubes to $\B$. Overall, we have
$|\B|\leq (d+2)\cdot 2d \cdot \opt = O(d^2\cdot \opt)$, as required.

Next consider $\EE[|\C_2|]$. A new cube is added to $\C_2$ in step~\ref{setp:3}.
In this case, none of the cubes in $\Q(p)$ is in $\B$ and $\sum_{Q\in \Q(p)} w(Q) \geq 1$
when point $p$ is presented, and the algorithm increments $|\C_2|$ by one.
At the beginning of the algorithm, we have
$\sum_{Q\in \Q(p)} w(Q) = \sum_{Q\in \Q(p)} 2^{-(d+1)}= 2^d\cdot 2^{-(d+1)}=1/2$.
Assume that the weights of the cubes in $\Q(p)$ were increased in $t$ iterations,
starting from the beginning of the algorithm, and the sum of weights of the cubes
in $\Q(p)$ increases by $\delta_1,\ldots , \delta_t>0$
(the weights of several cubes may have been doubled in an iteration).
Since $\sum_{Q\in \Q(p)} w(Q)=1/2+\sum_{i=1}^t \delta_i$,
then $\sum_{Q\in \Q(p)} w(Q) \geq 1$ implies $\sum_{i=1}^t \delta_i \geq 1/2$.
For every $i=1,\ldots, t$, the sum of weights of some cubes in $\Q(p)$,
say, $\Q_i \subset \Q(p)$, increased by
$\delta_i$ in step~\ref{step:4} of a previous iteration. Since the weights doubled,
the sum of the weights of these cubes was $\delta_i$ at the beginning of that iteration,
and the algorithm added one of them into $\B$ with probability at least $\delta_i$
in one random draw, which was repeated $2d$ times independently. Consequently, the probability
that the algorithm did \emph{not} add any cube from $\Q_i$ to $\B$ in that iteration is at most
$(1-\delta_i)^{2d}$. The probability that none of the cubes in $\Q(p)$ has been
added to $\B$ before point $p$ arrives is (by independence) at most
$$\prod_{i=1}^t (1-\delta_i)^{2d}
\leq e^{-2d\sum_{i=1}^t \delta_i}
\leq e^{-d}.$$
The total number of points $p$ for which step~\ref{setp:3} applies is at most $|P|$.
Since each unit cube contains at most $2^d$ points, we have $|P|\leq 2^d\cdot \opt$.
Therefore $\EE[|\C_2|] \leq |P| e^{-d} \leq (2/e)^d \, \opt \leq \opt$, as claimed.
\end{proof}

The above algorithm applies to \textsc{Unit Clustering} of integer points in $\ZZ^d$
with the same competitive ratio:
\begin{corollary} \label{cor:clustering-integer}
  The competitive ratio of \texttt{Algorithm Iterative Reweighing} for \textsc{Unit Clustering}
  in $\ZZ^d$ under the $L_{\infty}$ norm is $O(d^2)$ for every $d\in \NN$.
\end{corollary}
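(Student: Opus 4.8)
The plan is to exploit the fact that online \textsc{Unit Clustering} is a \emph{relaxation} of online \textsc{Unit Covering}: the clustering model only requires each cluster to have diameter at most $1$ and permits its enclosing box to grow or shift, whereas the covering model commits to a fixed unit cube the moment it is opened. Hence I would run \emph{the very same} \texttt{Algorithm Iterative Reweighing}, reinterpret the cubes it opens as clusters, and show that this reinterpretation is feasible, costs no more, and competes against the same offline optimum. No new algorithm and essentially no new analysis are needed beyond Theorem~\ref{thm:covering-integer}.

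First I would check feasibility and the cost comparison. Every cube produced by the algorithm is an integer unit cube $\prod_{j=1}^d [a_j,a_j+1]$, whose $L_\infty$-diameter equals $1$; thus each cube in $\C=\C_1\cup\C_2$ is already a legal cluster. The algorithm guarantees that every presented point becomes covered by some cube of $\C$ (either it was covered when it arrived, or one of steps~\ref{setp:3}--\ref{step:4} opens a covering cube), but the cubes may overlap. To obtain an honest partition I would assign each point $p\in P$ to one fixed cube of $\C$ containing it --- say, the cube that first covered $p$ during the online process. This yields a partition of $P$ into subsets, each contained in a single unit cube and therefore of $L_\infty$-diameter at most $1$, with at most $|\C|$ nonempty clusters. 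Writing $\alg_{\mathrm{clust}}$ and $\alg_{\mathrm{cov}}$ for the two interpretations of the algorithm's output on a common input, this gives $\alg_{\mathrm{clust}}\le |\C|=\alg_{\mathrm{cov}}$ pointwise, and hence $\EE[\alg_{\mathrm{clust}}]\le \EE[|\C|]$ in expectation. Next I would invoke the offline identity noted in the introduction: for any point set, a partition into subsets of diameter at most $1$ corresponds to a family of unit cubes covering the set and conversely, so $\opt_{\mathrm{clust}}(P)=\opt_{\mathrm{cov}}(P)$ for every $P\subset\ZZ^d$.

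Combining the two observations, for every input sequence,
\[
\frac{\EE[\alg_{\mathrm{clust}}]}{\opt_{\mathrm{clust}}}
\ \le\ \frac{\EE[|\C|]}{\opt_{\mathrm{cov}}},
\]
and the right-hand side is $O(d^2)$ by Theorem~\ref{thm:covering-integer}; taking the supremum over input sequences yields the claimed competitive ratio. There is no genuine obstacle here: the only point requiring care is that disambiguating multiply-covered points into a true partition never increases the count, which is immediate because the resulting clusters are indexed by cubes that were already opened. All the difficulty resides in Theorem~\ref{thm:covering-integer}; the corollary is exactly the statement that a good covering algorithm is automatically a good clustering algorithm, with no loss in either the online cost or the offline benchmark.
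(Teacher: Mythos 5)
Your proposal is correct and matches the paper's reasoning: the paper states the corollary without a separate proof, precisely because (as you argue) the cubes opened by \texttt{Algorithm Iterative Reweighing} are legal clusters, disambiguating multiply-covered points never increases the cluster count, and the offline optima for \textsc{Unit Covering} and \textsc{Unit Clustering} coincide under the $L_\infty$ norm. Your write-up simply makes explicit the details the paper leaves implicit.
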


\section{Lower bound for \texttt{Algorithm Greedy} for \textsc{Unit Clustering}} \label{sec:greedy}

Chan and Zarrabi-Zadeh~\cite{CZ09} showed that the greedy algorithm
for \textsc{Unit Clustering} on the line ($d=1$) has competitive ratio of $2$
(this includes both an upper bound on the ratio and a tight example).
Here we show that the competitive ratio of the greedy algorithm is unbounded for $d\geq 2$.
We first recall the algorithm:

\begin{quote}
\texttt{Algorithm Greedy.}
For each new point $p$, if $p$ fits in some existing cluster, put $p$ in such a cluster
(break ties arbitrarily); otherwise open a new cluster for $p$.
\end{quote}

\begin{theorem} \label{thm:greedy}
  The competitive ratio of \texttt{Algorithm Greedy} for \textsc{Unit Clustering}
  in $\RR^d$ under the $L_{\infty}$ norm is unbounded for every $d\geq 2$.
\end{theorem}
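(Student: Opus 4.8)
The plan is to exhibit, for every integer $N$, a point sequence in $\RR^2$ (embedded in $\RR^d$ by setting the last $d-2$ coordinates to $0$) on which a worst-case run of \texttt{Algorithm Greedy} opens $N+1$ clusters while $\opt=2$; letting $N\to\infty$ then forces the ratio to be unbounded. Since \texttt{Greedy} breaks ties arbitrarily, its competitive ratio is the supremum over all tie-breaking choices, so it suffices to describe one adversarial tie-breaking that produces this behavior. Fix a small $\eps$ with $N\eps\le\tfrac12$ (e.g.\ $\eps=1/(2N)$). For $j=0,1,\dots,N$ define a \emph{seed} $s_j=(-j\eps,-j\eps)$ and a \emph{partner} $t_j=(1-j\eps,1-j\eps)=s_j+(1,1)$, and present the points in the order $s_0,t_0,s_1,t_1,\dots,s_N,t_N$. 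The intended run is: when $s_j$ arrives it opens a new singleton cluster $\{s_j\}$; when $t_j$ arrives it is placed (by the chosen tie-break) into that same cluster, whose bounding box then becomes the full unit square $F_j=[-j\eps,\,1-j\eps]^2$, since $\|t_j-s_j\|_\infty=1$.

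The key step is to verify that every seed is forced to open a new cluster. When $s_k$ arrives, the only existing clusters are $F_0,\dots,F_{k-1}$, each a full unit square, i.e.\ maxed out with extent exactly $1$ in each coordinate. Such a cluster accepts a new point only if that point already lies inside it; but the first coordinate of $s_k$ is $-k\eps<-j\eps$ for every $j<k$, so $s_k$ lies strictly to the left of $F_j$ and is rejected by all of them. This gives $\alg\ge N+1$. This is exactly where the adversarial tie-break is essential: each $t_j$ in fact also fits into every earlier square $F_0,\dots,F_{j-1}$ (being up-and-to-the-right), so if \texttt{Greedy} were allowed to route $t_j$ into one of those, the cluster $\{s_j\}$ would stay a singleton and the next seed $s_{j+1}$, at $L_\infty$-distance $\eps<1$ from $s_j$, would simply join it and no new cluster would open. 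Forcing $t_j$ into $s_j$'s own cluster is precisely what maxes that cluster out and blocks the next seed.

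For $\opt$, note that all seeds lie in $[-\tfrac12,0]^2$ and all partners in $[\tfrac12,1]^2$ (using $N\eps\le\tfrac12$), so the two unit squares $[-\tfrac12,\tfrac12]^2$ and $[\tfrac12,\tfrac32]^2$ cover the whole set, giving $\opt\le 2$; and $\opt\ge2$ because $\|s_N-t_0\|_\infty=1+N\eps>1$. Hence the ratio on this instance is at least $(N+1)/2$. To get every $d\ge2$, I would append $d-2$ zero coordinates: this leaves all pairwise $L_\infty$-distances unchanged, so the forced-new argument still runs using only the first coordinate, and $\opt$ stays $2$ by taking the two covering cubes to be the above squares times $[-\tfrac12,\tfrac12]^{d-2}$. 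Letting $N\to\infty$ yields an unbounded competitive ratio for every $d\ge2$.

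The main obstacle is purely geometric bookkeeping: arranging the seeds so that each new seed falls just outside \emph{all} previously created clusters while the entire point set still fits in two unit cubes. The idea that resolves it is the opposition of directions—placing seeds successively down-and-to-the-left while their clusters grow up-and-to-the-right—so that a maxed-out square $F_j$ can never reach back to capture a later seed, yet seeds and partners together span barely more than one unit in each coordinate, keeping $\opt=2$. Once this alignment is fixed, everything else is a direct verification.
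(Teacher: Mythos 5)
Your construction and the individual verifications in it are sound, but the proof rests on a step that delivers a strictly weaker statement than the theorem as the paper establishes it: you need the adversary to control how \texttt{Greedy} resolves its ties. In your sequence each partner $t_j$ fits not only in the singleton $\{s_j\}$ but also in every earlier cluster $F_0,\dots,F_{j-1}$, and your argument requires that it be routed to $\{s_j\}$. In the standard model for deterministic online algorithms, the adversary knows the algorithm but does not operate it; ``break ties arbitrarily'' means that any fixed tie-breaking rule is a legitimate instantiation, and a lower bound on the competitive ratio of \texttt{Greedy} should hold against each such instantiation (possibly with a different input sequence for each). Your single oblivious sequence defeats only the one adversarial rule: as you yourself observe, under (say) the oldest-cluster-first rule every $t_j$ with $j\geq 1$ joins the cluster of $s_0,t_0$, every later seed joins the growing cluster of seeds, and \texttt{Greedy} finishes with exactly $2$ clusters on your input, i.e., ratio $1$. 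So what you have proved is that \emph{some} instantiation of \texttt{Greedy} has unbounded ratio; you have not ruled out that a well-chosen tie-breaking rule makes \texttt{Greedy} competitive, which is the content of the theorem under its natural (and the paper's) reading.

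The paper's construction closes exactly this gap by making every decision of \texttt{Greedy} forced, so that no tie ever arises. It presents \emph{anti-diagonal} pairs $\{(1+i/n,\,i/n),\,(i/n,\,1+i/n)\}$ for $i=0,1,\dots,n-1$, marching up the main diagonal: each newly presented point is at $L_\infty$-distance $1+(i-j)/n>1$ from one of the two points of every earlier cluster $C_j$, so the first point of each pair fits nowhere and must open a new cluster, and the second fits only in its partner's singleton. Hence \texttt{Greedy} creates $n$ clusters while $\opt=2$, for \emph{every} tie-breaking rule. Note also that your diagonal-pair structure cannot be repaired by reversing the drift: if the seeds march up-and-to-the-right, $s_j=(j\eps,j\eps)$ and $t_j=s_j+(1,1)$, then each new seed $s_k$ is within distance $1-(k-j)\eps<1$ of \emph{both} points of every earlier merged cluster, so \texttt{Greedy} is forced to absorb it and the construction collapses under every tie-break. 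The missing idea is precisely to place the two points of a pair at opposite corners of its unit square, so that the drift direction escapes both coordinates' reach simultaneously; with that change the argument becomes tie-free and the theorem follows in full strength.
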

\begin{proof}
It suffices to consider $d=2$; the construction extends to arbitrary dimensions $d\geq 2$.
The adversary presents $2n$ points in pairs
$\{ (1+i/n,i/n), (i/n,1+i/n)\}$ for $i=0,1,\ldots ,n-1$.
Each pair of points spans a unit square that does not contain any subsequent point.
Consequently, the greedy algorithm will create $n$ clusters, one for each point pair.
However, $\opt=2$ since the clusters $C_1=\{ (1+i/n,i/n):i=0,1,\ldots ,n-1\}$ and
$C_2=\{(i/n,1+i/n):i=0,1,\ldots ,n-1\}$ are contained in the unit squares
$[1,2]\times [0,1]$ and $[0,1]\times [1,2]$, respectively.
\end{proof}

\noindent
When we restrict \texttt{Algorithm Greedy} to integer points, its competitive ratio is
exponential in $d$.

\begin{theorem} \label{thm:greedy-integer}
  The competitive ratio of \texttt{Algorithm Greedy} for \textsc{Unit Clustering} in $\ZZ^d$
  under the $L_{\infty}$ norm is at least $2^{d-1}$ and at most $2^{d-1}+\frac12$ for every $d \geq 1$.
\end{theorem}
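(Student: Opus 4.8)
The plan is to bound the number of greedy clusters by a short counting argument driven only by the cluster sizes. Write $N$ for the total number of points and $g$ for the number of clusters produced by \texttt{Greedy}. Since every point lies in $\ZZ^d$ and every optimal cluster is contained in a unit cube, each optimal cluster holds at most $2^d$ integer points, so $N \le 2^d \opt$. Let $s$ be the number of \emph{singleton} greedy clusters (those of size $1$); the remaining $g-s$ clusters have at least two points each, whence $N \ge s + 2(g-s) = 2g - s$, i.e. $g \le (N+s)/2$. The crux of the upper bound is the inequality $s \le \opt$, which I would obtain by charging each singleton to the optimal cluster containing its point and checking this charge is injective: if two distinct greedy singletons $\{u\}$ and $\{v\}$ lay in a common optimal cluster, then $\|u-v\|_\infty \le 1$, so whichever of $u,v$ arrives second already fits into the singleton formed by the first, and \texttt{Greedy} would place it into that existing cluster rather than opening a new one (this uses only that greedy never opens a cluster when a point fits an existing one, so it holds for every tie-breaking rule). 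Combining the two bounds gives $g \le (2^d\opt + \opt)/2 = (2^{d-1}+\tfrac12)\opt$, as required.

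\noindent\textbf{Lower bound.}
For the matching bound $2^{d-1}$ the plan is to exhibit, in a large region of $\ZZ^d$, an input order on which \texttt{Greedy} is driven to use only clusters of size exactly two while the optimum uses only full unit boxes of $2^d$ points. I would take the optimum to be the standard partition into boxes $\prod_j \{2a_j, 2a_j+1\}$, and pair the points by the \emph{diagonal cross-matching} sending each $p$ with $p_1$ odd to $p+\mathbf{1}$, where $\mathbf{1}=(1,\dots,1)$ (points with even first coordinate are the partners). Each such pair has $L_\infty$-diameter exactly $1$ and straddles the coordinate-$1$ box boundary, so every optimal box has \emph{all} of its points matched to points of neighboring boxes; consequently the optimum keeps each box intact and uses $N/2^d$ clusters. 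If the presentation forces \texttt{Greedy} to realize precisely this matching, it uses $N/2$ clusters, giving ratio $\frac{N/2}{N/2^d}=2^{d-1}$ up to lower-order boundary terms. (For $d=1$ the diagonal matching degenerates and the construction yields only the trivial ratio $1$, consistent with the stated bounds.)

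\noindent\textbf{The main obstacle.}
Designing the presentation order is where I expect the real difficulty. A diagonal pair $\{p,p+\mathbf{1}\}$ has diameter exactly $1$ and hence spans a full unit box, so it can absorb a later point $w$ precisely when $w$ lies in that box; this threatens to create clusters of size three or four, which would collapse the ratio back toward $1$. The plan is to present the points pair by pair---lower endpoint (odd first coordinate) first, then its partner---in a spatial sweep increasing the first coordinate (and, within a level, increasing the remaining coordinates with enough spacing that a freshly formed box never contains the next lower endpoint). Under such a sweep a newly arriving lower endpoint fits into no existing cluster and therefore opens a fresh singleton, while its partner, arriving immediately afterward, fits exactly that singleton. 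The hard part will be to maintain the invariant that \texttt{Greedy} never forms a cluster of size larger than two: whenever an arriving point also fits some already-completed pair, one must show its intended partner is simultaneously available as a singleton and invoke the adversary's control of the (arbitrary) tie-breaking rule to route the point there instead. Verifying that the sweep guarantees this ``fresh partner available'' property for every point---together with the routine observation that the $o(N)$ boundary pairs distort the count by only a lower-order amount---is the technical heart of the argument.
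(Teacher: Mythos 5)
This part is correct, and it is essentially the paper's own argument: the paper also writes $\alg = m+s$, bounds the points available to size-$\ge 2$ clusters using the fact that each optimal cluster contains at most $2^d$ integer points and at most one greedy singleton, and concludes $\alg \le k\,2^{d-1} + s/2 \le k\left(2^{d-1}+\tfrac12\right)$. Your explicit justification of the injectivity claim $s\le\opt$ (the second of two singletons in one optimal cluster would have fit the first) is exactly the observation the paper uses implicitly, and it is valid for every tie-breaking rule.

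\noindent\textbf{On your lower bound.} Here there is a genuine gap, and it sits precisely where you flag ``the technical heart.'' Your input is (a large region of) the \emph{full} integer grid, so the unit cube spanned by a diagonal pair $\{p,\,p+\mathbf{1}\}$ contains $2^d-2$ further input points; greedy's pair-clusters therefore remain extendable, and no presentation order repairs this. Worse, your proposed fix appeals to ``the adversary's control of the (arbitrary) tie-breaking rule,'' but tie-breaking belongs to the \emph{algorithm}, not the adversary: a lower bound on the competitive ratio of \texttt{Algorithm Greedy} should hold for every tie-breaking rule, just as your upper bound does. And it does not, even for a natural sweep. For $d=2$, present the pairs with lower endpoints $(1,3),(1,2),(1,1),(1,0)$ in this (decreasing) order. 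The upper endpoint $(2,3)$ fits both its partner's singleton $\{(1,2)\}$ and the completed pair $\{(1,3),(2,4)\}$; if greedy joins the latter, then the stranded singleton $\{(1,2)\}$ \emph{must} later absorb the lower endpoint $(1,1)$ (greedy never opens a cluster for a point that fits an existing one --- no tie is involved), then $(2,2)$ joins that cluster, $(1,0)$ opens a new cluster, and $(2,1)$ can again be routed to the full box. The result is $3$ clusters for $8$ points instead of $4$, and the cascade repeats, driving the count strictly below $N/2$ and the ratio below $2^{d-1}$. The paper's construction avoids all of this by \emph{sparsifying} the input: it takes $S=B\cup D$ with $B=A+\{0,1\}^d$ and $D=C+\{0,1\}^d$, where $A$ and $C$ are the lattices of points with all coordinates $\equiv 0$ and $\equiv 2 \pmod 4$ in $[0,K]^d$, and pairs each point of $D$ with its unique $L_\infty$-neighbor in $B$. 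Each such pair spans a unit cube containing \emph{no third point of $S$}, so every cluster greedy creates is non-extendable at birth: each first point of a pair fits nothing and must open a cluster, and each partner fits exactly one cluster. Greedy is forced at every step, no ordering invariant or tie-breaking analysis is needed, and the count $|E|=(1+o(1))\,2^{d-1}\opt$ follows directly. The missing idea in your proposal is thus not a cleverer sweep but the sparsity of the point set itself.
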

\begin{proof}
We first prove the lower bound.
Consider an integer input sequence implementing a barycentric subdivision of the space,
as illustrated in Fig.~\ref{fig:greedy}.
Let $K$ be a sufficiently large positive multiple of $4$ (that depends on $d$).
We present a point set $S$, where $|S|=(2+o(1))(K/2)^d $ points to the algorithm,
and show that it creates $(1+o(1)) \, 2^{d-1} \opt$ clusters.
\begin{figure}[htbp]
\centering
\includegraphics[scale=0.67]{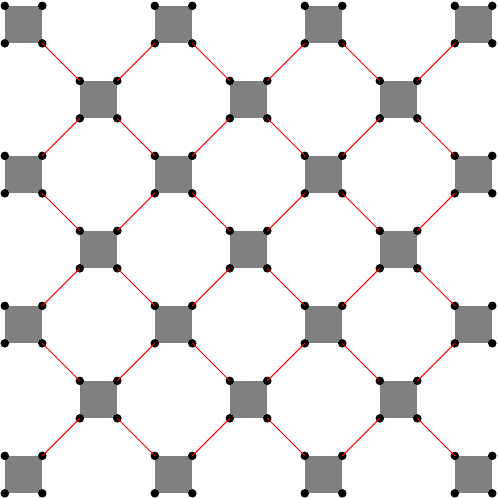}
\caption{A planar instance for the greedy algorithm with $K=12$; the edges in $E$ are
  drawn in red.}
\label{fig:greedy}
\end{figure}

Let $S=B \cup D$, where
\begin{align*}
A &= \{(x_1,\ldots,x_d)~|~ x_i \equiv 0 \pmod 4, \ 0 \leq x_i \leq K, \ i=1,\ldots,d\}, \\
B &= A + \{0,1\}^d,\\
C &= \{(x_1,\ldots,x_d)~|~ x_i \equiv 2 \pmod 4, \ 0 \leq x_i \leq K, \ i=1,\ldots,d\}, \\
D &= C + \{0,1\}^d, \\
E &= \{\{u,v\}: u \in B, v \in D, ||u-v||_\infty \leq 1\}.
\end{align*}
Note that each element of $C$ is the barycenter (center of mass) of $2^d$ elements
of $A$, namely the vertices of a cell of $(4\ZZ)^d$ containing the element.
Here $E$ is a set of pairs of lattice points (edges) that can be put in one-to-one
correspondence with the points in $D$. As such, we have
\begin{align*}
|A| &= \left(\frac{K}{4}+1\right)^d, \ |B|= 2^d |A| =(1+o(1)) \, \frac{K^d}{2^d}, \\
|C| &= \left(\frac{K}{4}\right)^d, \ |D|= 2^d |C| =(1+o(1)) \, \frac{K^d}{2^d}, \\
|E| &= |D| = (1+o(1)) \, \frac{K^d}{2^d}, \\
\opt &=|A \cup C|=|A|+|C|= (2+o(1)) \left(\frac{K}{4}\right)^d.
\end{align*}

It follows that $|E|=(1+o(1)) \, 2^{d-1} \opt$.
The input sequence presents the points in pairs, namely those in $E$.
The greedy algorithm makes one new non-extendable cluster for each such ``diagonal'' pair
(each cluster is a unit cube), so its competitive ratio is at least $2^{d-1}$ for every $d\geq 2$.

An upper bound of $2^d$ follows from the fact that each cluster in $\opt$ contains at most
$2^d$ integer points; we further reduce this bound. Let $\Gamma_1,\ldots,\Gamma_k$ be
the clusters of an optimal partition ($k=\opt$). Assume that the algorithm produces
$m$ clusters of size at least $2$ and $s$ singleton clusters. Since each cluster of $\opt$
contains at most one singleton cluster created by the algorithm, we have
\begin{align*}
  \alg &=m +s \leq \frac{(k-s) 2^d + s(2^d -1)}{2} + s =
  \frac{k \, 2^d -s}{2} + s \\
  &= k \,2^{d-1} + \frac{s}{2} \leq k \,2^{d-1} + \frac{k}{2} = k\left(2^{d-1} + \frac12\right),
\end{align*}
as required.
\end{proof}

\section{Conclusions} \label{sec:conclusion}

Our results in Theorems~\ref{thm:unit-clustering} and~\ref{thm:2^d}
show that the competitive ratio for both online \textsc{Unit  Clustering}
and \textsc{Unit Covering} grows with the dimension of the space.
From a broader perspective, the main question is how the curse of dimensionality plays
out for online algorithms in this area. In principle, the growth rates in the dimension
may be different for \textsc{Unit  Clustering} and \textsc{Unit Covering}.

On the one hand, the tight bound obtained for \textsc{Unit Covering} in $\RR^d$
shows that the growth rate of the competitive ratio for this problem must be exponential.
On the other hand, currently no online algorithm is known for \textsc{Unit Clustering} in $\RR^d$
under the $L_\infty$ norm with a competitive ratio $o(2^d)$. The current best upper bound for the competitive ratio
under this norm is $2^d  \cdot \frac56$ for sufficiently large dimensions $d \geq 2$, which is only
marginally better than the trivial $2^d$ ratio. This evidence suggests that the growth rate of the
competitive ratio for this problem may be exponential, as well. The additional degree of flexibility
(or ``ambiguity'') in \textsc{Unit Clustering} may be a reason for the difficulty in obtaining
a better lower bound if this is how things will finally turn out.

\smallskip
Several directions remain for future study. We summarize a few specific questions of interest.

\begin{question}
  Is there an upper bound of $o(2^d)$ on the competitive ratio for
  \textsc{Unit Clustering} in $\RR^d$ under the $L_\infty$ norm?
\end{question}

\begin{question}
  Is there a lower bound on the competitive ratio for
  \textsc{Unit Clustering} that is exponential in $d$?
  Is there a superlinear lower bound?
\end{question}

For online \textsc{Unit Covering} in $\RR^d$ under the $L_\infty$ norm,
the competitive ratio of the deterministic {\tt Algorithm Grid} is $2^d$, which is the best possible.
One remaining issue is in regard to randomized algorithms and oblivious adversaries.

\begin{question}
  Is there an upper bound of $o(2^d)$ on the competitive ratio of randomized algorithms for
  \textsc{Unit Covering} in $\RR^d$ under the $L_\infty$ norm?
\end{question}

\begin{question}
  Is there a superlinear lower bound on the competitive ratio of randomized algorithms
  (against oblivious adversaries) for \textsc{Unit Covering} in $\RR^d$ under the $L_\infty$ norm?
\end{question}

The reader is also referred to~\cite{Du18} for a discussion of related problems.

\end{document}